\newtheorem{Theorem}{Theorem}
\newtheorem{Lemma}[Theorem]{Lemma}
\newtheorem{Experiment}{Experiment}
\def\T{{\mathrm{\scriptscriptstyle T}}}
\def\spacingset#1{\renewcommand{\baselinestretch}{#1}\small\normalsize}
  \title{\vspace{-24pt}\textbf{Estimating dynamic transmission rates with a Black-Karasinski process in stochastic SIHR models using particle MCMC}}
\author[1,2]{Avery Drennan}
\author[2]{Jeffrey Covington}
\author[3]{Dan Han}
\author[1,2]{Andrew Attilio}
\author[1]{Jaechoul Lee}
\author[4]{Richard Posner}
\author[2]{Eck Doerry}
\author[2]{Joseph Mihaljevic}
\author[1]{Ye Chen\thanks{Correspondence to: Ye Chen. Email: ye.chen@nau.edu}}
\affil[1]{\small Department of Mathematics and Statistics, Northern Arizona University, Flagstaff, AZ, U.S.A.}
\affil[2]{\small School of Informatics, Computing, and Cyber Systems, Northern Arizona University, Flagstaff, AZ, U.S.A.}
\affil[3]{\small Department of Mathematics, University of Louisville, Louisville, KY, U.S.A.}
\affil[4]{\small Department of Biological Sciences, Northern Arizona University, Flagstaff, AZ, U.S.A.}
  \date{\today}
\begin{document}

\maketitle

\vspace{-12pt}
\begin{abstract}
Compartmental models are effective in modeling the spread of infectious pathogens, but have remaining weaknesses in fitting to real datasets exhibiting stochastic effects. We propose a stochastic SIHR model with a dynamic transmission rate, where the rate is modeled by the Black-Karasinski (BK) process — a mean-reverting stochastic process with a stable equilibrium distribution, making it well-suited for modeling long-term epidemic dynamics. To generate sample paths of the BK process and estimate static parameters of the system, we employ particle Markov Chain Monte Carlo (pMCMC) methods due to their effectiveness in handling complex state-space models and jointly estimating parameters. We designed experiments on synthetic data to assess estimation accuracy and its impact on inferred transmission rates; all BK-process parameters were estimated accurately except the mean-reverting rate. We also assess the sensitivity of pMCMC to misspecification of the mean-reversion rate. Our results show that estimation accuracy remains stable across different mean-reversion rates, though smaller values increase error variance and complicate inference results. Finally, we apply our model to Arizona flu hospitalization data, finding that parameter estimates are consistent with published survey data.
\end{abstract}

\vspace{12pt}
\noindent
\textbf{Keywords}: Black-Karasinski process; Influenza modeling; Particle filter; pMCMC; Stochastic SIHR model.

\section{Introduction}


Compartmental models, originally proposed by Kermack and McKendrick \cite{Kermack1927,Kermack1932}, are among the most widely used modeling approaches in epidemiology. 
They are valued for their efficiency, interpretability, and extensibility. However, performing inference and fitting these models to real data remain challenging. Introducing stochasticity helps overcome these difficulties and enhances interpretability when analyzing long-term disease dynamics, especially for the transmission rate  -- the key parameter that governs how quickly susceptible individuals become infected. 
Numerous factors modulate the transmission rate, and most vary over time \cite{London1973, Shaman2009}. Seasonality (temperature and humidity), changing contact patterns (school terms, holidays, mobility restrictions), shifts in population immunity (waning immunity, vaccination campaigns), pathogen evolution (immune-escape variants), and demographic or socioeconomic changes all act on different time-scales. Consequently, the transmission rate is not only time-varying, but has complex and unknown dynamics. 
An early approach to incorporating stochasticity in modeling the transmission rate $\beta_t$ was presented by Kalivianakis et al. \cite{Kalivianakis1994}. In their work, $\beta_t$ was modeled as a state variable within a state-space framework, and a Bayesian Filter was applied to infer the path of $\beta_t$. 
Bayesian filters, such as Particle Filters \cite{li2024real}, Ensemble Kalman Filters \cite{lal2021application, sun2023analysis}, Extended Kalman Filters \cite{song2021maximum}, and Iterative Filtering \cite{king2016statistical, fox2022real}, can effectively integrate stochastic compartmental models with data by inferring dynamic parameters like the stochastic transmission rate \cite{sheinson2014comparison, yang2014comparison}. However, the time-varying transmission rate is just one of several parameters that may need to be estimated, and many real-world applications require inferring dynamic and static parameters simultaneously.

Standard Bayesian filters estimate dynamic states with known static parameters. When static parameters are unknown, particle Markov Chain Monte Carlo (pMCMC) is ideal for jointly estimating both static parameters and dynamic states  \cite{andrieu2010particle}. Dureau et al. \cite{dureau2013BMpMCMC} introduced a stochastic differential equation (SDE) notation to model $\beta_t$ as a stochastic process and employed pMCMC techniques to infer the model parameter which includes the static parameters and sample path of $\beta_t$ jointly.  Similar applications are found in \cite{endo2019,camacho2015temporal,funk2018real,Cazelles2018,spannaus2022inferring,cazelles2021dynamics, andrade2022inferring}. A frequently used stochastic process in this context is Brownian Motion (BM), which introduces continuous random fluctuations into the transmission dynamics. BM does however have drawbacks in inference and forecasting problems. The sample paths of BM are non-stationary and thus unstable when no data is available for conditioning. We therefore prefer a stationary process which admits an equilibrium distribution as $t\rightarrow \infty$. This ensures that $\beta_t$ remains stable in the absence of data, unlike Brownian Motion, whose unbounded variance can lead to unrealistic drifts in long-term forecasts.

One other commonly used stochastic process in epidemiology modeling is the Ornstein-Uhlenbeck (OU) process which exhibits mean-reverting behavior. Theoretical studies on the existence and uniqueness of global solutions, as well as the existence of ergodic stationary distributions for SDE systems with $\beta_t$ modeled by an OU process, have drawn considerable attention. Following Wang et al.'s work on the stochastic SIS model \cite{wang2018stochastic}, several studies have explored similar theoretical aspects of other stochastic compartmental models, resulting in a growing number of publications including \cite{laaribi2023generalized, allen2017primer, zhang2023dynamics, zhou2021ergodic, cao2019dynamical}, to name a few.

While it effectively captures random noise and is stationary, the OU process does not guarantee the non-negativity of the transmission rate $\beta_t$. To address this limitation, the Black-Karasinski (BK) process, a mean-reverting stochastic process that applies a log transformation to OU process, ensuring non-negative values, has been proposed \cite{black1991bond}. Inspired by Allen's work on environmental variability \cite{allen2016environmental}, Han and Jiang \cite{han2023complete} studied the stationary distribution and local stability of a stochastic SEIR-type model using the BK process. Subsequent theoretical advancements have been made for SEIR \cite{shi2023dynamics}, SEIS \cite{zhou2024stationary}, and smoking epidemic models \cite{han2025global}. While these theoretical results suggest that OU and BK processes are potentially powerful drivers for transmission dynamics, little work has been done to validate these processes in a stochastic compartmental model with real data and studying the challenges of parameter identifiability.

Outside the field of epidemiology modeling, pMCMC has been employed to infer OU process parameters. For instance, Uyeda and Harmon \cite{Uyeda2014} used Bayesian inference to fit OU models in evolutionary biology, addressing trait evolution under stabilizing selection. In systems biology, Golightly and Wilkinson \cite{Golightly2008} utilized pMCMC for parameter estimation in multivariate diffusion processes that include the dynamics of the OU process. Van der Meulen et al. \cite{Meulen2017} focused on estimation for diffusion processes using advanced MCMC techniques related to pMCMC. These studies demonstrate the effectiveness of pMCMC in handling parameter inference for OU processes. However, the application of pMCMC method to epidemiological models that incorporate the OU process or BK process remains limited and underexplored.

This paper aims to bridge the gap between theoretical development and practical application by applying the parameter fitting approach pMCMC for the specific combination of the SIHR model with the BK process.
We outline the structure of the paper as follows. 
Section 2 details the mathematical model with notations and definitions. 
Section 3 describes the parameter estimation approach.
Section 4 presents experiments to investigate the inference challenges and applies our methodology to Arizona influenza hospitalization data.
Section 5 concludes with a summary, the main challenges encountered, and possible directions for future research.

\section{Model description}

In this section, we explicitly formulate the mathematical models and introduce notation and definitions used throughout the study.

\subsection{The deterministic SIHR model}
In this study, we focus on the SIHR model, which simplifies the complex compartmental frameworks previously used in our COVID-19 studies \cite{Ratnavale2022, Chen2021, Mallela2022, Miller2023}. The SIHR model is defined by the following system of differential equations:
\begin{equation} \label{eqn:determ_SIHR_model} 
  \left\{ \:\:
  \begin{split}
  \dfrac{dS_t}{dt} \:\: &= \: - \beta_t \frac{S_t I_t}{N}, \\
  \dfrac{dI_t}{dt} \:\: &= \: \beta_t \frac{S_t I_t}{N} - \alpha I_t, \\
  \dfrac{dH_t}{dt} \:\: &= \: \alpha \gamma I_t - \eta H_t, \\
  \dfrac{dR_t}{dt} \:\: &= \: \alpha (1 - \gamma) I_t + \eta H_t, \\
  \end{split}
  \right.
\end{equation}
where $S$, $I$, $H$, and $R$ are the susceptible, infected, hospitalized, and recovered populations respectively. The total population \( N \) is given by \( N = S_t + I_t + H_t + R_t \) for any $t$. The time-dependent parameter $\beta_t$ is the transmission rate at time $t$ and its dynamics are discussed in the subsequent sections. $\alpha > 0$ denotes the rate at which infected individuals leave the infected compartment $I$, $\gamma \in [0,1]$ is the proportion of infected individuals who become hospitalized, and $\eta > 0$ represents the recovery rate of hospitalized individuals.



The transmission rate, $\beta_t$, is often modeled as a time-dependent parameter, but with unknown dynamics.
Due to the unpredictable nature of disease transmission, stochastic modeling has been widely adopted to incorporate randomness and capture uncertainties in dynamic systems. Originally developed for financial applications such as stock market forecasting, stochastic models effectively handle market uncertainties and random fluctuations \cite{merton1973, black1973, hull1990}. The success of the stochastic modeling approach in finance has inspired its use in other fields, including epidemiology, where modeling uncertainty and predicting future events are equally critical \cite{allen2010introduction, keeling2008modeling}. By incorporating stochastic processes, epidemiological models more effectively capture inherent randomness and uncertainties in disease dynamics, such as demographic variability, environmental fluctuations, and shifts in human behavior \cite{allen2008introduction,nasell2002stochastic,he2010IF,king2015IF}. 

\subsection{Modeling transmission rates via Brownian motion}

A common first choice is to treat the time varying transmission rate $\beta_t$ as a Brownian‐motion (BM) process governed by the stochastic differential equation (SDE):
\begin{equation}\label{eqn:BM}
 d\beta_t = \mu dt + \sigma dB_t,    
\end{equation}
where $\mu$ is the drift coefficient, representing the average change rate of $\beta_t$, $\sigma$ is the volatility parameter that controls the magnitude of random fluctuations around the long-term mean, and $B_t$ represents a standard Brownian motion.
In order to rigorously define Brownian motion, it is also necessary to introduce the notion of a complete probability space $(\Omega, \mathcal{F}, \{\mathcal{F}_t\}_{t \geq 0}, \mathbb{P})$ with a filtration $\{\mathcal{F}_t\}_{t \geq 0}$ satisfying the usual conditions (i.e., it is increasing and right-continuous, and $\mathcal{F}_0$ contains all $\mathbb{P}$-null sets). For more details on undefined terminologies, notations, and results of SDE, refer to {\O}ksendal \cite{Oksendal2003}.

Note that the increments of Brownian motion satisfy
\begin{equation*}
 dB_t \sim \mathcal{N}(0, dt),
\end{equation*}
indicating that $dB_t$'s are normally distributed with mean zero and variance $dt$. Therefore, the distribution of $\beta_t$ is 
\begin{equation}\label{eqn:BM_distribution}
 \beta_t\sim \mathcal{N}(\beta_0+\mu t, \, \sigma^2t).
\end{equation} 
This implies that, as $t\rightarrow\infty$, the mean and variance of $\beta_t$ approach infinity. To avoid this issue, $\mu$ is often set to be 0 in applications, so that $\beta_t$ is centered at $\beta_0$. However, because the variance of $\beta_t$ still increases linearly with time, the increasing uncertainty makes the process less predictable over time. This can pose challenges for long-term forecasting in epidemiology.

A variant of this stochastic equation considers the logarithm of $\beta_t$ as Brownian motion as follows:
\begin{equation*}
 d\ln{\beta_t} = \mu dt + \sigma dB_t,
\end{equation*}
which is often used to ensure that the transmission rate $\beta_t$ remains positive. In this case $\beta_t$ is called geometric Brownian motion and $\ln{\beta_t}$ follows a normal distribution, so $\beta_t$ has a log-normal distribution.

\subsection{The OU process and BK process}


Whereas Brownian motion has unbounded variance, the Ornstein-Uhlenbeck (OU) process admits an equilibrium distribution, making it a popular choice in epidemiological modeling for parameters expected to stabilize around a long-term average.
The standard OU process for $\beta_t$ can be defined by the
\begin{equation}
 d\beta_t = \lambda (\mu -  \beta_t) \, dt + \sigma \sqrt{2\lambda} \, dB_t,
\end{equation}
where $\mu$ is the long-term mean which represents the stable equilibrium value that the process tends to approach over time; $\lambda$ denotes the mean-reverting rate, which controls how quickly the process returns to the long-term mean; and $\sigma$ represents the volatility term that determines the magnitude of the random fluctuations around the long-term mean. Note that the distribution of $\beta_t$ at any time $t$ is $\beta_t \sim \mathcal{N}(\mu, \sigma^2)$, meaning that $\beta_t$ has a stationary distribution. The mean-reverting characteristic allows the OU process to better capture scenarios where values are expected to return to a baseline level, rather than exhibiting unbounded growth or divergence in BM as appearing in equation \eqref{eqn:BM_distribution}.

While the OU process effectively captures random fluctuations, it does not ensure that $\beta_t$ remains strictly positive. The Black-Karasinski (BK) process addresses this issue by applying a log transform to preserve non-negativity \cite{black1991bond}.
It is defined by the SDE
\begin{equation} \label{eqn:BK_process}
    d\ln{ \beta_t } = \lambda (\mu - \ln{ \beta_t }) \, dt + \sigma \sqrt{2\lambda} \, dB_t, 
\end{equation}
where $ \ln{ \beta_t }$ follows a standard OU process with a stationary distribution independent of $t$, $\ln{ \beta_t } \sim \mathcal{N}(\mu, \sigma^2)$. This SDE has an analytical solution
\begin{equation}\label{eqn:BK_solution}
 \beta_t
 = \beta_0{e^{-\lambda t}} \cdot 
   \exp\left( \mu (1 - e^{-\lambda t}) + \sigma \sqrt{1 - e^{-2\lambda t}} \cdot \epsilon_t \right),    
\end{equation}
where $\epsilon_t\sim \mathcal{N}(0, 1)$. We will incorporate this analytical solution into our numerical SDE solver.

\subsection{The stochastic SIHR model}

To model the disease dynamics, we combine the SIHR compartmental model \eqref{eqn:determ_SIHR_model} with the BK process \eqref{eqn:BK_process} to extend the SIHR model as follows:
\begin{equation} \label{eqn:SIHR_model} 
  \left\{ \:\:
  \begin{split}
  \dfrac{dS_t}{dt} \:\: &= \: - \beta_t \frac{S_t I_t}{N}, \\
  \dfrac{dI_t}{dt} \:\: &= \: \beta_t \frac{S_t I_t}{N} - \alpha I_t, \\
  \dfrac{dH_t}{dt} \:\: &= \: \alpha \gamma I_t - \eta H_t, \\
  \dfrac{dR_t}{dt} \:\: &= \: \alpha (1 - \gamma) I_t + \eta H_t, \\
  d\ln{\beta_t}      &= \: \lambda (\mu - \ln{\beta_t}) \, dt + \sigma \sqrt{2\lambda} \, dB_t
  \end{split}
  \right.
\end{equation}



The mean reverting-rate $\lambda > 0$ quantifies how quickly public health interventions, behavioral changes, or natural processes bring the transmission rate back to normal levels after disturbances. Also, $\lambda$ reflects the strength of regulatory mechanisms or feedback processes in the transmission dynamics. The long-term mean $\mu$ encapsulates the baseline transmission rate, considering factors like average contact rates, virus strains, etc. Therefore, $\mu$ serves as the central tendency around which $\ln{\beta_t}$ oscillates due to stochastic influences. The volatility $\sigma > 0$ represents the impact of random events or uncertainties affecting disease transmission, such as sudden changes in human behavior, environmental factors, and reporting errors. By introducing stochasticity into the model, $\sigma$ permits more realistic simulation of epidemic dynamics that account for unpredictability.

As this specific combination of an SIHR model with a BK process is unstudied, we first establish in Appendix Theorem \ref{Theorem:existence_and_uniqueness} that the stochastic system (\ref{eqn:SIHR_model}) is well-posed: a solution exists for any admissible initial conditions, and that solution is unique (there is exactly one possible trajectory once the random noise is fixed). This mathematical analysis ensures that the conclusions drawn from numerical simulations are reliable.

\section{Methods}

We will use Bayesian inference to combine the model (\ref{eqn:SIHR_model}) with data for estimating the state $X_t$ and the parameters of the BK process and some other parameters, see Figure \ref{fig:structure} for our parameter estimation framework.

\begin{figure}[!ht]
  \centering
  \includegraphics[]{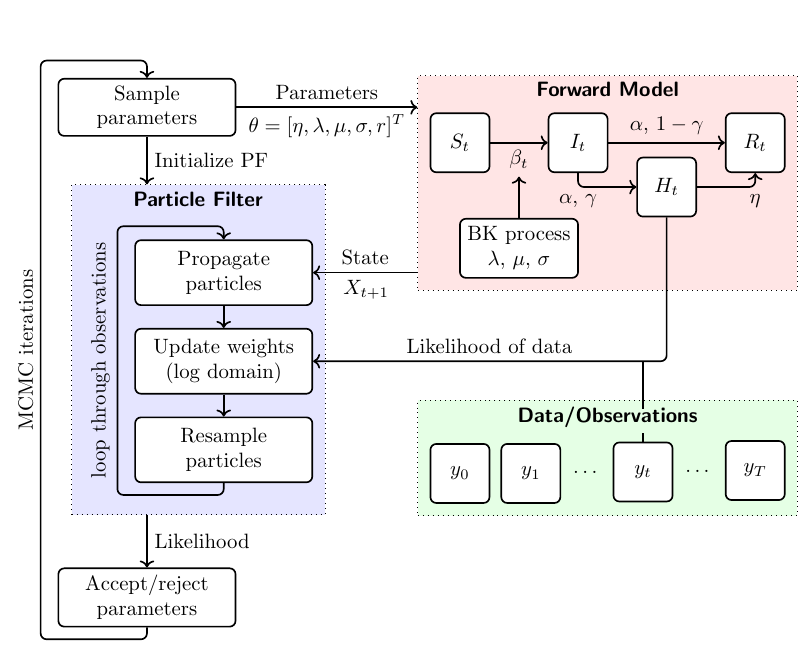}
  \caption{\label{fig:structure} Schematic diagram of the parameter estimation framework. The particle MCMC algorithm has an outer loop of MCMC iterations where each iteration samples candidate values for the parameters and then runs a particle filter to assess the likelihood of the data given the parameters. The particle filter is an inner loop which processes the data sequentially in time. At each step of the inner loop, particles (weighted samples in the state space) are propagated according to the forward model (\ref{eqn:SIHR_model}) and then updated based on the likelihood of the data.}
  \end{figure}


\subsection{Inferring parameters by particle MCMC}

For notational convenience, we discretize the compartments as $S_{0:T}$, $I_{0:T}$, $H_{0:T}$, $R_{0:T}$, and the path of the transmission rate as $\beta_{0:T}$, and seek to infer its parameters $\theta = [\eta, \lambda, \mu, \sigma, r]^{\T}$. Here, $\theta$ includes the SIHR model parameters in system \eqref{eqn:determ_SIHR_model}, the parameters of the BK process governing $\beta_t$ in equation \eqref{eqn:BK_process}, and the dispersion parameter in the negative binomial distribution \eqref{eqn:NB_rp}. In our context, we are given the number of influenza hospitalizations, denoted as $y_{0:T}$, reported by public health officials or obtained from synthetic data. 

A state-space model connects a latent state process and the observed data through a probabilistic framework that incorporates both state evolution and observation densities. Based on the  dynamics of the system \eqref{eqn:SIHR_model}, we define the  state-space model as
\begin{align*}
 X_{0} &\sim \pi_0( \,\cdot\,; \theta ), \\
 X_{t} &\sim \psi( \,\cdot\, |\, X_{t-1}; \theta), \\ 
 Y_{t} &\sim h( \,\cdot\, |\, X_{t}; \theta)
\end{align*}
for $t \in \{0,1,\dots,T\}$. Here, the state variables $X_{0:T}$ contain all the compartments and the transmission rate such that $X_{0:T} = [S_{0:T}, I_{0:T}, H_{0:T}, R_{0:T}, \ln \beta_{0:T}]^{\T}$, and the observations $y_{0:T}$ are assumed to be known. The forward model $\psi$ updates the first four variables by applying a forward Euler discretization to the SIHR model in equation \eqref{eqn:SIHR_model}, and use \eqref{eqn:BK_solution} to update the last variable. In addition, the forward model $\psi$ in our study updates the last state variable $\beta_t$ using the analytical solution of the BK process defined in equation \eqref{eqn:BK_solution}.

To define the observation density $h( \,\cdot\, | X_t, \theta)$, we assume that the number of observed  hospitalizations $y_t$ at time $t$  follows a negative binomial distribution with mean equal to the number of hospitalizations $H_t$:
\begin{equation}\label{eqn:NB_rp}
 Y_t \sim \mathcal{NB}(p_t, r),
\end{equation}
where $p_t$ is computed by setting the mean of the negative binomial distribution $\frac{r(1-p_t)}{p_t}$ to be $H_t$, so $p_t=\frac{r}{r+ H_t}$. Because the system in equation \eqref{eqn:SIHR_model} has a unique global solution almost surely as proved in Appendix Theorem~\ref{Theorem:existence_and_uniqueness}, we can find the numerical solution of $H_t$ by integrating the system (\ref{eqn:SIHR_model}). The initial state $X_0$ is drawn from a prior distribution $\pi_0$. We describe the time evolution of $X_{t}$ as a Markov Process, such that the distribution of $X_{t}$ depends only on the previous state $X_{t-1}$. 


We aim to estimate the posterior density $\pi(X_{0:T}, \theta \,|\, y_{0:T})$ of the latent state $X_{0:T}$ and the model parameters $\theta$, given the observed data $y_{0:T}$. To facilitate efficient sampling, we factor the posterior density as the product of the conditional posterior of the latent state given data $y_{0:T}$ and parameters $\theta$, and the marginal posterior of the model parameters $\theta$:
\begin{equation}\label{eqn:posterior_factorization}
 \pi( X_{0:T}, \theta \,|\, y_{0:T} ) = \pi(X_{0:T} \,|\, y_{0:T}, \theta) \cdot \pi(\theta \,|\, y_{0:T}).
\end{equation}
Building upon the need to efficiently sample from the joint posterior distribution $\pi(X_{0:T}, \theta  \,|\, y_{0:T})$ in \eqref{eqn:posterior_factorization}, we employ the particle Markov Chain Monte Carlo (pMCMC) algorithm in \cite{andrieu2010particle}, which integrates the PF and MCMC algorithms. Specifically, the PF algorithm approximates the first factor $\pi(X_{0:T} \,|\, y_{0:T}, \theta)$ by efficiently sampling the latent states conditioned on the observed data and fixed model parameters, while MCMC targets the second factor $\pi(\theta \,|\, y_{0:T})$ by sampling from the marginal posterior of the model parameters while integrating over the latent states. In the following, we provide detailed descriptions of the PF and pMCMC algorithms we implemented.

%

\subsection{Particle Filter}

The Particle Filter is a sequential Monte Carlo method used for estimating the latent state of a dynamical system $X_{0:T}$ given partial and noisy observations $y_{0:T}$. It was introduced in \cite{Kitagawa1996} and has subsequently become a popular technique to perform asymptotically exact inference on state space models with highly non-linear dynamics and arbitrary noise distributions. To facilitate  inference, the PF approximates the posterior distribution of the states $\pi(X_{t} \,|\, y_{0:t},\theta)$ at time t with a set of particle realizations \( \{X_t^i\}_{i=1}^{N_p} \), where $N_p$ represents the number of particles. As $N_p \rightarrow \infty$ the discrete distribution over the particles converges to $\pi(X_{t} \,|\, y_{0:t},\theta)$. A comprehensive overview of PF can be found in the survey  by Doucet et al.\ \cite{Doucet2002}. 

The PF algorithm builds the approximation of $\pi(X_{t} \,|\, y_{0:t},\theta)$ recursively by updating the particle distribution as new data $y_t$ becomes available via two steps, updating and resampling. The update step solves the dynamical system from $t-1$ to $t$ for each particle $X_{t-1}^i$, representing the initial estimate of $X_{t}^i$ before taking into account the observation $y_t$. To incorporate the observation $y_t$,  the particles are reweighted using importance sampling, where the weights are determined by the observation density $h(y_t \,|\, X_{t}^i, \theta)$. The particles are then resampled according to these weights to approximate the filtering distribution $\pi(X_t\,|\,y_{0:t},\theta)$. 
This process is repeated at each time step to recursively approximate the full joint distribution  $\pi(X_{0:T}|y_{0:T},\theta)$. Another useful byproduct of the PF algorithm is the full data likelihood estimate $\pi(y_{0:T}|\theta)$, which is obtained by recursively approximating each predictive likelihood $\pi(y_t|y_{0:t-1},\theta) = \int h(y_t|X_t,\theta)\pi(X_t|y_{0:t-1}) dX_t$  via the sample mean of the particle weights at time $t$, and then multiplying these incremental estimates together over $t=0,\cdots,T$. More detail on the theoretical foundation of this approach is given in \cite{andrieu2010particle}. 

We choose to implement the PF described in \cite{Gentner:2018} as it represents weights in the log domain, enabling more accurate weight computations, and can avoid particle degeneracy especially when the involved distributions include exponentials or products of functions. One key component of the algorithm is the calculation of the LogSumExp function using the Jacobian logarithm algorithm, detailed further by Algorithm~\ref{alg:jacob} in the Appendix. 

\begin{algorithm}[!ht]
\caption{Particle Filter in Log Domain}\label{alg:logPF}
\begin{algorithmic}[1]
    \State \textbf{Input:} Number of particles $N_p$, number of time steps $T$, observations $y_{0:T}$, prior distribution $\pi_0$, observation density $h$, forward model $\psi$, and parameter vector $\theta$. 
    
    \State \textbf{Initialize:} 
        \State \quad Draw initial states $X_0^i \sim \pi_0(\cdot \,;\, \theta)$, $i=1,\ldots, N_p$
   
    \For{$t = 1$ to $T$}
        \For{$i = 1$ to $N_p$}
            \State $X_t^i \sim \psi(\cdot \,|\, X_{t-1}^i, \theta)$ \Comment{Forward model \eqref{eqn:SIHR_model}}
            \State $\hat{w}_t^i = \ln h(y_t \,|\, X_t^i,\theta)$ 
            \Comment{computing log likelihood}
            
            \EndFor 
        \State  $\hat{w}_t = LogSumExp\left(\{\hat{w}_t^{i}\}_{i=1}^{N_p})\right)$  \Comment{Algorithm \ref{alg:jacob}, keep only last element of output}
            \For{$i = 1: N_p$}

            \State $\hat{w}_t^i{^*} =  \hat{w}_t^i - \hat{w}_t$ \Comment{Weight normalization}

            \EndFor 

        \State Draw indices $\{\ell_i\}_{i=1}^{N_p}$ with weights $\{ \hat{w}_t^{i^*}\}_{i=1}^{N_p}$ and set  $X_t^i = X_t^{\ell_i}$   \Comment{Algorithm \ref{alg:systematic_log}}

    \EndFor
    \State \textbf{Output:} Estimated particle states $\{X_{0:T}^i\}_{i=1}^{N_p}$,  cumulative log-likelihood estimate $\ln \hat{\pi}(y_{0:T} \,|\, \theta)=\sum\limits_{t=0}^{T}(\hat{w}_{t}  - \ln N_p$).
\end{algorithmic}
\end{algorithm}


\subsection{Particle Markov Chain Monte Carlo (pMCMC)}

While the Particle Filter is a powerful tool for estimating hidden states, it generally assumes that model parameters $\theta$ are known, an unrealistic assumption in practice. To jointly infer both the model parameters and the latent states, we employ the pMCMC algorithm introduced in \cite{andrieu2010particle}. To propose new parameter values in the pMCMC algorithm, we employ a random walk proposal distribution. Specifically, at each iteration $m$, we generate a candidate parameter vector $\theta^{\ast}$ by sampling from a proposal distribution $q(\theta \,|\, \theta^{(m-1)})$ centered at the current parameter estimate $\theta^{(m-1)}$. In our implementation  we take $q(\theta|\theta^{(m-1)})$ to be a multivariate Gaussian distribution $(\ref{eqn:adaptive_proposal_distribution})$
\begin{equation}\label{eqn:adaptive_proposal_distribution}
 \theta^{\ast} \sim \mathcal{N}(\theta^{(m-1)}, \Sigma^{(m-1)}),    
\end{equation}
where $\Sigma^{(m-1)}$ is the proposal covariance matrix. We employ a variant of the AM algorithm discussed in \cite{AndrieuSurvey}, where the covariance matrix $\Sigma^{(m-1)}$ is constructed to balance the trade-off between exploration and acceptance rate in the parameter space. The elements of $\Sigma^{(m-1)}$ are calibrated using adaptive methods to ensure efficient mixing of the Markov chain after a burn-in period of $M_b$ iterations in which $\Sigma^{(m-1)}$ is fixed to a diagonal matrix. This approach helps address spurious correlations in the parameters in early iterations by incorporating adaptive covariance estimates only after the Markov chain has stabilized. 
The pMCMC framework detailed in Algorithm~\ref{alg:pMCMC} leverages the Particle Filter to estimate the intractable marginal likelihood $\pi(y_{0:T} \,|\, \theta)$ pointwise by integrating over the latent states $X_{0:T}$. This estimation enables us to sample the model parameters from the posterior distribution $\pi(\theta \,|\, y_{0:T})$ using MCMC methods. During each iteration of the pMCMC algorithm, the Particle Filter provides an unbiased estimate of the likelihood for the proposed parameters $\theta^*$, enabling the MCMC sampler to effectively explore the parameter space while maintaining the correct posterior distribution. Consequently, the pMCMC algorithm offers a robust and efficient framework for inference in our state-space SIHR model, providing insights into the pathwise evolution of the BK process and the stochastic dynamics of disease transmission. 

\begin{algorithm}[!ht]
\caption{pMCMC Algorithm}\label{alg:pMCMC}
\begin{algorithmic}[1]
    \State \textbf{Initialize:} 
    \State \quad Draw initial parameters from prior $\theta^{(0)} \sim \pi_{\theta}(\cdot)$ 
    \State \quad Run Algorithm~\ref{alg:logPF} with $\theta^{(0)}$ to estimate initial log-likelihood
     \vspace{-1mm}
     \begin{equation*}
     \mathcal{\hat{L}}(\theta^{(0)}) = \log \hat{\pi}(y_{0:T} \,|\, \theta^{(0)}) + \log \pi_{\theta}(\cdot)
     \end{equation*}
    
    \vspace{-2mm}
    \For{$m = 1$ to $M$}
        \State Draw new parameters $\theta^{\ast} \sim q(\theta \,|\, \theta^{(m-1)})$ \Comment{See proposal distribution \eqref{eqn:adaptive_proposal_distribution}}
        \State Compute log-likelihood estimate $\log \hat{\pi}(y_{0:T} \,|\, \theta^*)$ \Comment{Algorithm~\ref{alg:logPF}}
        \State Compute acceptance ratio:
        \vspace{-1mm}
        \begin{equation*}
        A = \mathcal{\hat{L}}(\theta^{\ast}) - \mathcal{\hat{L}}(\theta^{(m-1)}) + \log q(\theta^{(m-1)} | \theta^{\ast}) - \log q(\theta^{\ast} \,|\, \theta^{(m-1)})
        \end{equation*}

        \vspace{-2mm}
        \If{$A \geq 0$}
            \State Accept $\theta^{\ast}$: $\theta^{(m)} \gets \theta^{\ast}$
        \Else
            \State Draw $u \sim \mathcal{U}(0,1)$
            \If{$\log(u) < A$}
                \State Accept $\theta^{\ast}$: $\theta^{(m)} \gets \theta^{\ast}$
            \Else
                \State Reject $\theta^{\ast}$: $\theta^{(m)} \gets \theta^{(m-1)}$
            \EndIf

        \EndIf
    \EndFor
    \State \textbf{Output:} Estimated posterior samples $\{\theta^{(m)})\}_{m=1}^M$. 
\end{algorithmic}
\end{algorithm}

\section{Numerical experiments and results}

Having detailed the pMCMC algorithm and its integration with our stochastic SIHR model, we now proceed to evaluate its performance through a series of numerical experiments designed to validate its efficacy and limitations. These experiments include tests on synthetic data to assess the estimation accuracy of the key parameters and demonstrate the practical applicability of our approach using real-world influenza hospitalization data. A crucial aspect of these evaluations is addressing parameter identifiability issues, which occur when multiple parameter sets yield similar model outputs, complicating the task of uniquely estimating parameters from data. 

\begin{Experiment} \label{ex:pMCMC_full_model}
Estimation accuracy on synthetic data.
\end{Experiment}
This experiment aims to validate the estimation accuracy of the pMCMC algorithm on the stochastic SIHR model using synthetic data. We simulated 250 days of hospitalization data $H_{0:250}$ using the parameter set listed in Table~\ref{table:example_1}. To improve the tractability of the parameter inference problem, we fixed two key parameters: the hospitalization rate $\gamma$ and the infection recovery rate $\alpha$.

\begin{table}[!ht]
    \begin{center}
    \caption{Parameter set for the synthetic data generation and pMCMC inference}
    \label{table:example_1}
    \renewcommand{\arraystretch}{1.1}
    \setlength{\tabcolsep}{8pt}
    \begin{tabular}{llcc}
        \hline
        Parameter & Description & Value & Prior \\
        \hline
        \multicolumn{4}{l}{\textit{Model parameters}:} \\
        \( N \) & total population & 1,000,000  & fixed \\
        $\beta_0 $ & initial transmission rate & 0.4 & $\mathcal{U}(0,1)$  \\
        \( I_0\) &  initial infectious individuals & 100 & $\mathcal{U}(0,1,000)$  \\
        \( \gamma \) & hospitalization rate & $\frac{1}{1000}$ & fixed \\
        \( \eta \) & hospitalization recovery rate & $\frac{1}{10}$ & $\mathcal{U}(0,1)$ \\
        $\alpha$ & infection recovery rate & $\frac{1}{7}$ & fixed \\
        \( \sigma \) & volatility & 0.4 & $\mathcal{B}(1.5,10)$ \\
        \( \lambda \) & mean reverting rate & $\frac{1}{35}$ & $\mathcal{B}(3,10)$ \\
        $\mu$ & long term mean & -1.3 & $\mathcal{N}(-0.8,0.4)$\\
        $r$ & dispersion parameter & 100 & $\mathcal{U}(\frac{1}{1000},\frac{1}{5})$\\
        \hline
        \multicolumn{4}{l}{\textit{pMCMC parameters}:} \\
        $M$ & number of MCMC iterations & 100,000& \\
        \( N_p \) & number of particles & 1,000 &  \\
        $M_b$ & burn-in iterations & 1,000& \\
        \hline
    \end{tabular}
    \end{center}
\end{table}


After generating the synthetic data, we applied the pMCMC algorithm to infer the BK process parameters and estimate the latent states which includes sample path of $\beta_t$. The Bayesian prior distributions were intentionally chosen to be weakly informative to test the robustness of the pMCMC algorithm. In Table~\ref{table:example_1}, $\mathcal{N}$, $\mathcal{U}$, and $\mathcal{B}$ refer to the normal distribution, the uniform distribution, and the beta distribution, respectively. Post pMCMC sampling, we ran the PF one more time conditioned on the posterior mean $\bar{\theta}$ of the static parameters to obtain an estimate of the posterior over $X_{0:T}$, $\pi( X_{0:T} \,|\, y_{0:T},\bar{\theta})$.

The results of a single experiment, as summarized in Figure~\ref{fig:pMCMC_results_full_exp_good}, indicate that the pMCMC algorithm successfully inferred the path of $\beta_t$. Specifically, Figure~\ref{fig:pMCMC_results_full_exp_good}(a) shows the true data remain largely contained within the shaded credible interval, suggesting that the Bayesian inference effectively captures the underlying uncertainty and variability in the data. Subplot (b)  illustrates the convergence of the MCMC chain. Subplot (c) is the posterior distribution of a key parameter $\eta$ of the SIHR model inferred by pMCMC, with the posterior mean closely aligned with the true value for the synthetic data. Subplot (d) shows the posterior distribution for the dispersion parameter $1/r$; the posterior mean deviates substantially from the true value, suggesting that the inference for this parameter is less reliable—potentially due to its weak identifiability. It is widely recognized in the statistical literature that dispersion parameters can be challenging to estimate, especially when the dispersion is close to the mean \cite{Anscombe1949, Clark1989}. Subplot (e) demonstrates that the 50\% quantile of the estimated $\beta_t$ aligns well with the true path, suggesting a good fit. Additionally, subplots (f) and (g) show that the mean ($\mu$) and volatility ($\sigma$) parameters were accurately inferred. However, subplot (h) reveals difficulties in estimating the mean reverting rate $\lambda$, indicating potential challenges in accurately capturing this parameter, which will be discussed in the following experiments.

\begin{figure}[!ht]
\centering
\includegraphics[trim={5.5cm 1.5cm 4.5cm 1.5cm},clip,width=1\textwidth]{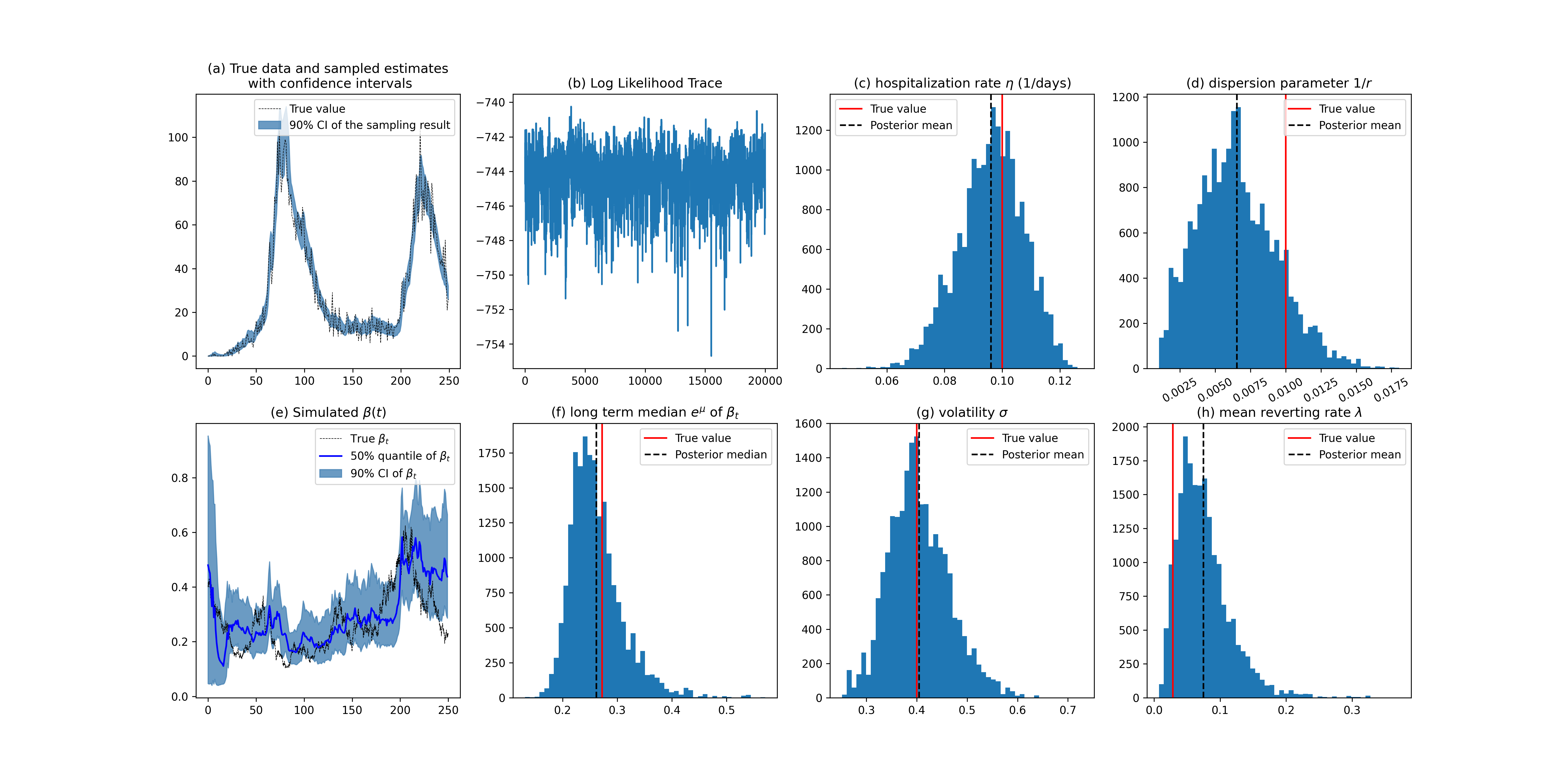}
\caption{The posterior distributions in subplots (a) and (e) are obtained conditional on the posterior mean of the static parameters.  (a) True data and sampled estimates with 90\% credible interval  (CI) obtained from the distribution of the $H$ compartment in the particle filter (PF).
(b) Trace plot of the log likelihood from the MCMC sampling process; (c) and (d) Histogram for the posterior distributions of the SIHR model parameters with the red line indicating the true value and the black dashed line showing the posterior mean; (e) True trajectory of $\beta_t$ with the 50\% quantile and 90\% CI obtained from the PF samples, conditional on the posterior mean of the static parameters; (f), (g), and (h) Histograms for the posterior distributions of the BK process parameters.}
\label{fig:pMCMC_results_full_exp_good}
\end{figure}

While the pMCMC algorithm generally performs well, we observed challenges in accurately inferring $\beta_t$ during periods with low case numbers, see Figure~\ref{fig:pMCMC_results_full_exp_bad}. Specifically, when case counts diminished, the likelihood function became insensitive to the parameters, leading to incorrect estimations. To address this, we focused our analysis on periods where the case number exceeded five. Figure~\ref{fig:pMCMC_results_full_exp_bad} highlights this interval and shows that the parameter estimates are close to the true values within the interval. Subplots (f) and (g) display the sample mean and standard deviation of the $\beta_t$ path within the high case number interval, demonstrating improved estimation accuracy. Similar to the previous results, estimating the mean reverting rate $\lambda$ remained challenging (subplot h).

\begin{figure}[!ht]
\centering
\includegraphics[trim={5.5cm 1.5cm 4.5cm 1.5cm},clip,width=1\textwidth]{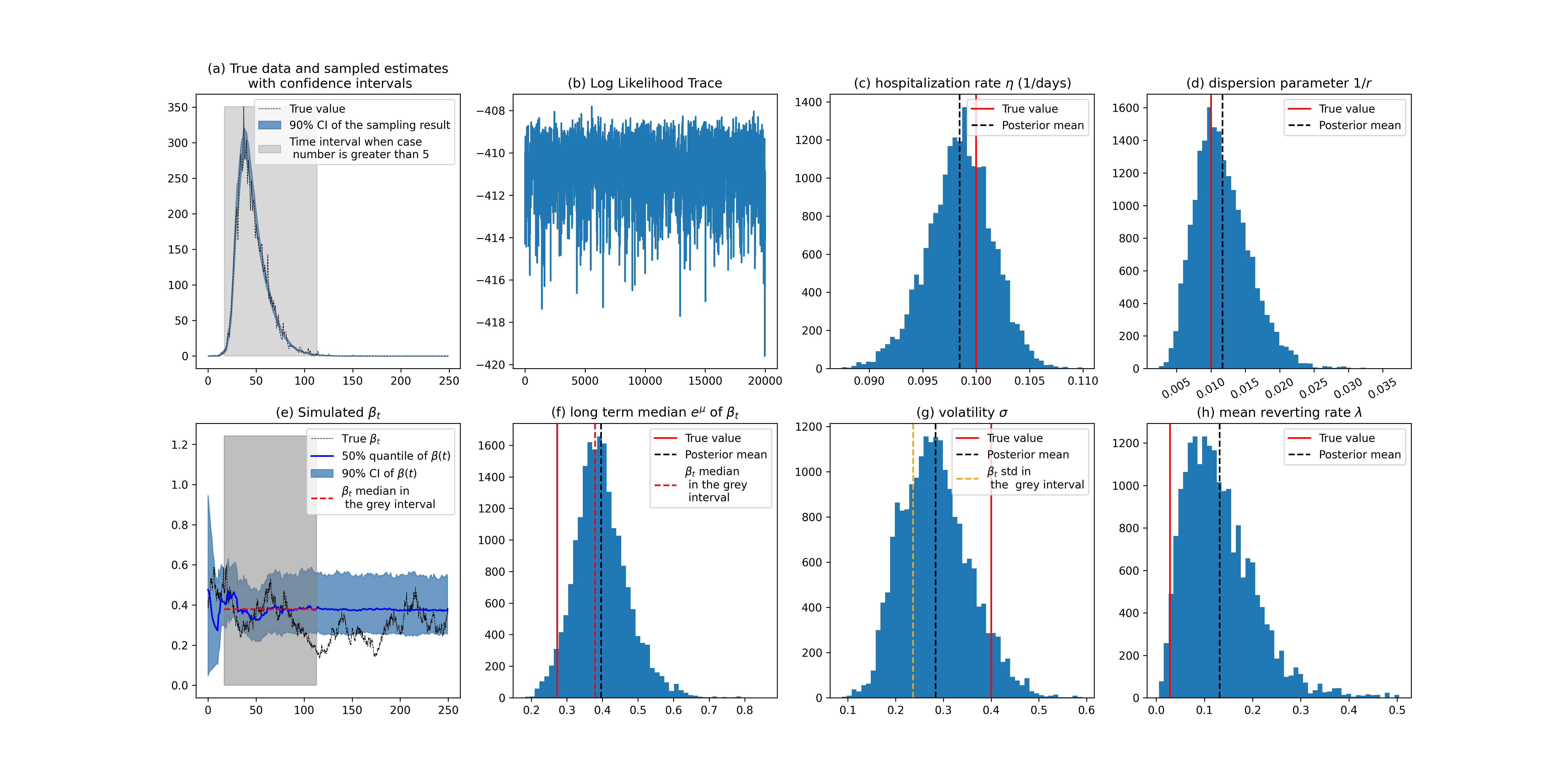}
\caption{The posterior distributions in subplots (a) and (e) are obtained conditional on the posterior mean of the static parameters.  (a) True data and sampled estimates with 90\% credible interval  (CI) obtained from the distribution of the $H$ compartment in the particle filter (PF). The shaded grey region highlights the time interval when the case number exceeds 5. (b) Log likelihood trace plot from MCMC, indicating the convergence of the MCMC sampler. (c) and (d) Histogram showing the posterior distribution of the SIHR model parameters with the red line indicating the true value and the black dashed line showing the posterior mean. (e) Simulated $\beta_t$ over time, with the true path, 50\% quantile, and 90\% CI from the PF samples. The red dashed line shows the mean of $\beta_t$ within the grey-shaded interval. (f), (g), and (h) Histograms display the posterior distributions of the BK process parameters.}
\label{fig:pMCMC_results_full_exp_bad}
\end{figure}

In summary, the experiment highlights both the strengths and limitations of the pMCMC algorithm in inferring the path of $\beta_t$ and estimating the associated parameters. The algorithm performs well when case numbers are sufficiently large, accurately capturing the dynamics of $\beta_t$ and reliably estimating parameters such as $\mu$ and $\sigma$. However, during periods of low case counts, the likelihood function's insensitivity leads to unreliable parameter estimates. Additionally, inferring the mean reverting rate $\lambda$ remains challenging throughout, suggesting potential identifiability issues that require further investigation. In the next experiment, we will explore these challenges in more detail.

\begin{Experiment}\label{ex:lambda_correct}
Effect of decorrelation time on estimation accuracy for $\beta_t$.
\end{Experiment}

A critical parameter within the pMCMC inference framework is the mean reverting rate $\lambda$, which influences the temporal dependencies in the system by dictating how quickly the process reverts to its mean. Accurate estimation of $\lambda$ within the BK process has been challenging in our experiments, which results in weak identifiability of the trajectory of $\beta_t$ in various examples.

In real epidemiological data, the true mean reverting rate $\lambda$ is often unknown. To investigate the identifiability issues associated with this parameter and to determine the optimal mean reverting rate settings in the absence of prior knowledge, we fixed $\lambda$ to a set of values $\big\{1, \frac{1}{7}, \frac{1}{7 \times 2}, \ldots, \frac{1}{7 \times 14} \big\}$, corresponding to decorrelation times ranging from 1 day to 98 days. For each decorrelation time, we generated 50 different BK processes, each representing a sample path of $\beta_t$ over 250 days. We applied the PF directly to infer the path of $\beta_t$ while fixing the static parameters at their true values $\theta$. This avoids the extra uncertainty that pMCMC can introduce through simultaneous estimation of those static parameters.

\begin{figure}[!h]
\centering
\includegraphics[width=0.64\textwidth]{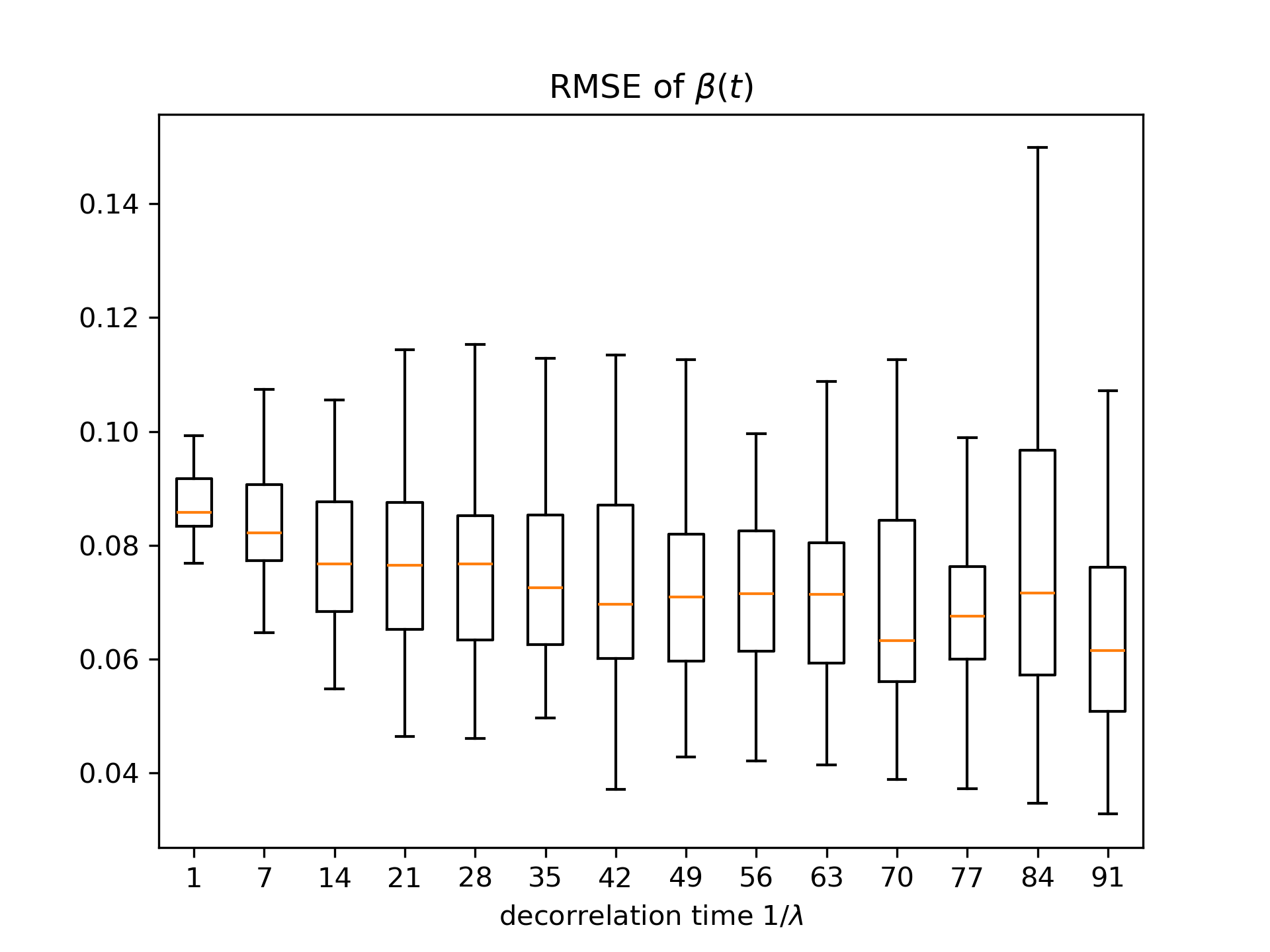}
\caption{The box plots show the root mean square error of $\beta_t$ estimates across different decorrelation times. For each decorrelation time, 50 different $\beta_t$ paths were simulated. The PF was then used to infer the trajectory of $\beta_t$,  and the RMSE was summarized for each simulations.}
\label{fig:RMSE_lambda_all}
\end{figure}

Our analysis, as depicted in Figure~\ref{fig:RMSE_lambda_all}, reveals that if $1/\lambda$ is small, then the resulting SDE has a large drift term $ \lambda (\mu - \ln{ \beta_t })$ in \eqref{eqn:BK_process}, which makes it harder for the algorithm to identify the path. The mean RMSE values remain consistent with large $1/\lambda$, indicating stable average performance of the PF regardless of $1/\lambda$. Varying $\lambda$ does not significantly impact the mean accuracy of the $\beta_t$ path estimation when decorrelation time $1/\lambda$ is large. Despite this, we observe an increase in the variance of RMSE as the decorrelation time increases. A longer decorrelation time allows $\beta_t$ to exhibit a greater variability, leading to a higher variability in the estimation errors. Consequently, while the average accuracy remains unaffected, the reliability of the estimates decreases with increasing decorrelation time, and further experiments are designed in the following to study the idetifiability issue of the algorithm.

\begin{Experiment}\label{ex:lambda_wrong}
Sensitivity of $\beta_t$  estimation to misspecification of $\lambda$.
\end{Experiment}

In this experiment, we evaluate how the deviations from the true mean-reversion rate $\lambda$ affect the performance of the PF in estimating the path of $\beta_t$. By intentionally misconfiguring $\lambda$ and analyzing the resulting estimation errors, we aim to understand the robustness of the PF against parameter misspecification and to identify an optimal $\lambda$ that maintains inference accuracy of $X_t$ and the remaining model parameters. The findings will inform best practices for parameter selection in an SIHR model with the BK process.

For each $1/\lambda \in \{ 1, 7, 14, \dots, 98 \}$ days, we generated 10 distinct datasets with BK processes, each representing $\beta_t$ over a 250-day span, as illustrated in Example~\ref{ex:pMCMC_full_model}. For each dataset, we applied the PF using all $\lambda$ values in $\big\{1, \frac{1}{7}, \frac{1}{7 \times 2}, \ldots, \frac{1}{7 \times 14} \big\}$ to infer the $\beta_t$ path, while keeping other static parameters at their true values to eliminate additional uncertainty.

\begin{figure}[!h]
\centering
\includegraphics[trim={3.5cm 0cm 3.5cm 0cm},clip,width=1\textwidth]{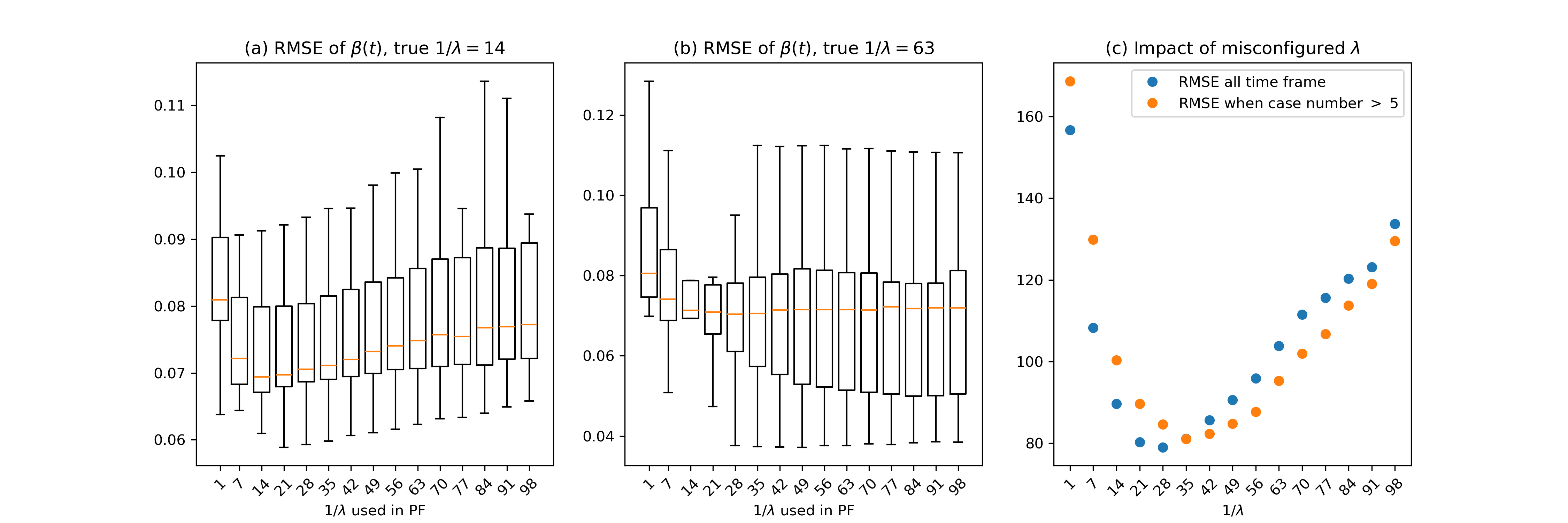}
\caption{(a) RMSE of $\beta_t$ estimates when the true decorrelation time is 14 days ($1/\lambda = 14$). (b) RMSE of $\beta_t$ estimates when the true decorrelation time is 63 days ($1/\lambda = 63$). (c) Aggregate ranking of RMSE accuracy scores across all misconfigured $\lambda$ values. The rankings are based on RMSE calculated during periods with case numbers exceeding five, indicating that a decorrelation time of approximately 35 days yields the best overall performance.}
\label{fig:wrong_lambda_box_rank}
\end{figure}

Figure~\ref{fig:wrong_lambda_box_rank}(a) shows that when the true decorrelation time is $1/\lambda = 14$ days, the mean RMSE is lowest when the PF uses the correct $\lambda$. However, the large variance suggests that the PF's performance is not highly sensitive to the misspecification of $\lambda$. This insensitivity is more pronounced in Figure~\ref{fig:wrong_lambda_box_rank}(b), with a true decorrelation time of $1/\lambda = 63$ days. Here, the mean RMSE remains relatively consistent for $1/\lambda \geq 14$ days, indicating that when the true mean reverting rate is small, it allows the path of $\beta_t$ to diffuse more freely, which in turn makes it easier for the particle filter to infer the true trajectory. Results of additional experiments showing similar effects can be found in Appendix Figure~\ref{fig:wrong_lambda_RMSE_box_all} and Figure~\ref{fig:wrong_lambda_RMSE_box_k}.

To quantify the overall performance across all configurations, we ranked the RMSE values for each $\beta_t$ path and aggregated these ranks for each decorrelation time setting. The aggregated ranking, presented in Figure~\ref{fig:wrong_lambda_box_rank}(c), show that a decorrelation time of approximately 35 days yielded the best mean estimation accuracy. This suggests that, in the absence of accurate knowledge of the true $\lambda$, the setting of $1/\lambda = 35$ days offers a reasonable balance between estimation error and variability. These findings highlight the PF's robustness to certain degrees of misspecification in $\lambda$, especially when the true decorrelation time is large. 

\begin{Experiment}
Application to real data—modeling influenza hospitalizations in Arizona. 
\end{Experiment}

Having solidified our confidence in model behavior using synthetic data, we then turned to validation of our model using real world observational data.   Specifically, we used the latest daily hospital admissions data at the state level in the U.S. obtained from healthdata.gov, an official source provided by the CDC.
The data was obtained from reporting of daily hospital admissions which was mandatory from February 2022 until it was suspended on May 1, 2024. Since then, reporting has been voluntary through the CDC’s National Healthcare Safety Network and is has been provided at weekly resolution. In our analysis, we focus on 20 weeks of daily hospitalization case counts for Arizona during two periods: from October 1, 2022 to February 13, 2023, and from October 1, 2023 to February 13, 2024. These time spans were selected because they contain a reasonably large number of hospitalization cases, suggested by Experiment \ref{ex:pMCMC_full_model}.

We employed the pMCMC algorithm in conjunction with the SDE model to infer the static model parameters, including the hospitalization recovery rate $\eta$, the rate of progression from infection $\alpha$, and the BK process parameters $\mu$ and $\sigma$. In this experiment, we changed the likelihood function from negative binomial distribution to Poisson distribution because the dispersion parameter inferred from the dataset was large ($>1000$). A large $r$ implies that the variance of the negative binomial distribution approaches the mean, making Poisson distribution a suitable approximation. In addition, we fixed the mean reverting rate $1/\lambda=35$ suggested by Experiments~\ref{ex:lambda_correct} and \ref{ex:lambda_wrong}.

\begin{figure}[!h]
\centering
\includegraphics[trim={4.6cm 0cm 4.6cm 0cm},clip,width=1\textwidth]{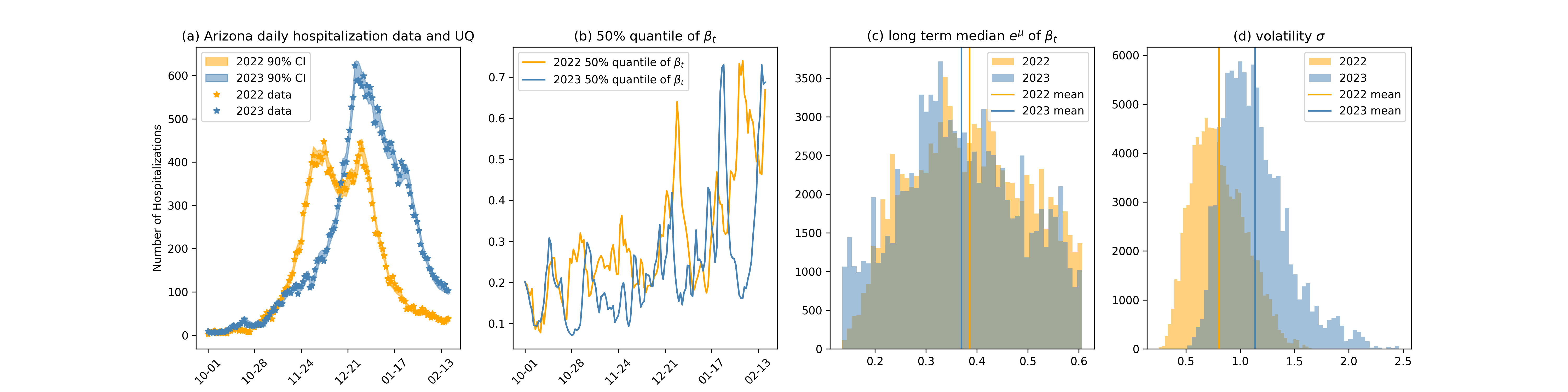}
\caption{(a) Comparison of Arizona daily hospitalization data and associated uncertainty quantification (UQ) for 2022 and 2023. (b) Median estimates of temporal patterns in the transmission rate $\beta_t$. (c) and (d) histograms of the posterior distributions for long-term mean $\mu$ and volatility $\sigma$  respectively, with vertical lines denoting mean values for each year.}
\label{fig:Arizona_flu}
\end{figure}

Figure~\ref{fig:Arizona_flu} summarizes our results. As indicated in Figure~\ref{fig:Arizona_flu}(a), the model's estimates align closely with the observed hospitalization data, indicating the efficiency of the inference framework. Figure~\ref{fig:Arizona_flu}(b) shows that the inferred transmission rate $\beta_t$ exhibits similar patterns in both 2022 and 2023, with an increasing trend around December. This is consistent with the CDC's report in \cite{CDC2024}, which indicates that influenza activity in both years began to rise in early November and peaked in late December -- a typical trend for influenza seasons. Figure~\ref{fig:Arizona_flu}(c) shows that the posterior distribution of the long-term mean $\mu$ of $\beta_t$ for 2022 and 2023 are comparable, as documented in \cite{CDC2024} that the percentage of specimens testing positive for influenza and the cumulative rates of influenza-associated hospitalizations were similar in both years. Furthermore, consistent public health policies and community behaviors during these periods likely contributed to the similar transmission dynamics. However, the volatility $\sigma$ of the transmission rate, as displayed in Figure~\ref{fig:Arizona_flu}(d), is notably higher in 2023. According to the CDC's report, this variability may be attributed to co-circulating influenza strains and a shift in predominant virus types. Specifically, the increased activity of influenza B viruses in February 2024 likely influenced transmission dynamics, contributing to the increased volatility.

In addition, the  mean hospitalization recovery rate $\eta$ is estimated to be 0.28 with 95\% credible interval [0.22,0.32] based on data from 2022–2023, and 0.22 with 95\% credible interval [0.18,0.24] based on data from 2023–2024, corresponding to an average hospitalization length of 3–5.5 days. This finding is consistent with the survey results reported in \cite{Pivette2020,Lina2020}, which indicate that the length of hospitalization for influenza ranges from 3.4 to 11.5 days depending on the age group. The estimated mean progression rate from infection $\alpha$ is 0.13 with with 95\% credible interval [0.09,0.2] based on data from 2022-2023, and 0.1 with 95\% credible interval [0.06,0.15] based on data from 2023-2024, corresponding to an average progression length from 5 to 16 after symptom onset, which is supported by a study from the Global Influenza Hospital Surveillance Network specified that patients aged over 5 years had to exhibit at least one systemic symptom (such as fever, malaise, headache, or myalgia) and one respiratory symptom (like cough, sore throat, or shortness of breath) and must have been hospitalized within 7 days of symptom onset to be included in the study \cite{Lina2020}.

\section{Conclusion}

In this study, we developed a stochastic SIHR model with the BK process which models the transmission rate $\beta_t$.  The selection of the SIHR model was driven by the availability of US influenza hospitalization data. 

We initially employed the pMCMC algorithm to determine whether the $\beta_t$ path, the BK process parameters, the dispersion parameter, and the SIHR model parameters could be accurately inferred. The results demonstrated that accurate inference was achievable when case numbers remained above a certain threshold. Under these conditions, the data exhibited sufficient signal strength, and the likelihood function was highly sensitive to changes in model parameters, allowing the algorithm to distinguish among different parameter values and converge to the true ones.

In contrast, some experiments failed to recover the correct sample path when case numbers were near zero. In these scenarios, the likelihood function became insensitive to parameter changes, resulting in inaccurate estimates. By restricting the analysis to periods with at least five reported cases, the parameter estimates aligned closely with the true values. These findings highlight the importance of maintaining a minimum case threshold to ensure identifiability and reliable inference.

In these experiments, we also found that the algorithm exhibits reduced sensitivity to the mean-reverting rate $\lambda$. So we further investigate the identifiability issue on $\lambda$ and found that the PF within the pMCMC framework is less responsive to the variation in $\lambda$, particularly when $\lambda$ is small. Despite the misspecification in $\lambda$, the PF could still accurately estimate the $\beta_t$ path. Given the complex real-world factors influencing the mean-reverting rate, such as human behavior and environmental conditions, it is challenging to determine an optimal $\lambda$. Based on our experiments, we selected a decorrelation time of $1/\lambda = 35$ days which empirically minimized the adverse effects of $\lambda$ misspecification. 

Finally, we applied the SIHR model and pMCMC algorithm to U.S. influenza hospitalization data from Arizona for the two-year flu seasons spanning from October to February in 2022-2023 and 2023--2024. The inferred $\beta_t$ trajectory closely mirrored the transmission rate fluctuations observed in Arizona. This alignment can be attributed to consistent levels of influenza activity and public health measures across the two-year flu seasons. In particular, the increased volatility in 2023 can be attributed to the co-circulation of multiple influenza strains. These results underscore the model's ability to capture real-world epidemiological dynamics.

In conclusion, our study demonstrates the efficacy of integrating the stochastic SIHR model using a BK process with pMCMC for robust parameter inference and state estimation in epidemiological modeling. The BK process provides certain theoretical stability guarantees that give it an advantage over non-stationary processes such as Brownian motion in the low data regime; we can also more effectively infer process parameters such as the mean and variance which are useful analyzing the long term dynamics of $\beta_t$.  Future research could explore the incorporation of alternative compartmental models or stochastic processes. The SIHR model is a reasonable choice for modeling influenza over a single season, where we typically assume that recovered individuals do not become susceptible again (since reinfections within one season are relatively rare). However, this assumption can cause the susceptible compartment to rapidly diminish, artificially inflating the estimated transmission rate at the end of the season. One key reason for this phenomenon is the lack of sufficient stochasticity in the state dynamics: without it, the model compensates for observed fluctuations by pushing $\beta_t$ dramatically to match observed cases exactly. We plan to implement the PF described in  \cite{Calvetti2021}, which incorporates process noise directly into the compartment transitions and absorbs some of the observational variance. In addition, the BK process may not be ideal for modeling the transmission rate, especially when the available data do not provide sufficient information to accurately estimate the long-term mean. Under those circumstances, $\beta_t$  could be biased toward an unrealistic value.

\section*{Data availability}

All models, data, and code required to reproduce our results and figures are available at \url{https://github.com/NAU-CCL/PMCMC/tree/main}.

\section*{Declaration of competing interest}

The authors declare that they have no known competing financial interests or personal relationships that could have appeared to influence the work reported in this paper.

\section*{Acknowledgements}

This research was supported by NIH under Grant No. 2U54MD012388 and R01GM111510. Additionally, Avery Drennan gratefully acknowledges the support of the Jennison and Parks scholarship. Ye Chen would also like to express her sincere gratitude to Perko Faculty Research Award. Jaechoul Lee's research was supported by Northern Arizona University's Technology \& Research Initiative Fund (TRIF). Computational analyses were run on Northern Arizona University's Monsoon computing cluster, funded by Arizona’s Technology and Research Initiative Fund

\appendix
\section{Appendix}

\subsection{Existence and uniqueness of a global solution}
In this section we rigorously prove that the model satisfies the existence and uniqueness criteria, and specify the parameter fitting and state estimation algorithms.
In the study of stochastic dynamical systems, the existence of a unique global solution ensures that the conclusions drawn from numerical simulations and analyses are reliable. We aim to prove that for any initial condition within the domain, the system admits a unique solution that remains positive for all time with probability one. We start with a proof of Lemma 1, which generalizes the It\^{o}'s Lemma to study the dynamics of $V(X_t, t)$ in a system governed by the SDE $d X_t = f(X_t) d t + g(X_t) d B_t$. The proof employs standard results on stochastic integration and quadratic variation as developed in \cite{Oksendal2003}, while tailoring the analysis to the specific structure of our problem under consideration.

Let $X_t$ be a $d$-dimensional It\^{o} process governed by the stochastic differential equation
\begin{equation*}
d X_t = f(X_t) d t + g(X_t) d B_t, \qquad X_t \in \mathbb{R}^d,
\end{equation*}
where $f: \mathbb{R}^d \rightarrow \mathbb{R}^d$, $g: \mathbb{R}^d \rightarrow \mathbb{R}^{d}$, and $B_t$ is a standard $m$-dimensional Brownian motion. Suppose $V(X_t, t) \in C^{2,1}(\mathbb{R}^d \times \mathbb{R}_{+}, \mathbb{R})$ is a scalar-valued function that is continuously differentiable in $t$ and twice continuously differentiable in $X_t$. Then, the dynamics of $V(X_t, t)$ are given by
\begin{equation*}
d V(X_t, t) = \mathcal{L} V(X_t, t) d t + V_X(X_t, t) g(X_t) d B_t,
\end{equation*}
with the differential operator $\mathcal{L}$ is defined as
\begin{equation*}
 \mathcal{L} V(X_t, t)
 = V_t(X_t, t) + V_X(X_t, t) f(X_t) + \frac{1}{2} g(X_t)^{\T} V_{X X}(X_t, t) g(X_t),
\end{equation*}
where $V_t = \frac{\partial V}{\partial t}$, $V_X = \left[ \frac{\partial V}{\partial x_1}, \ldots, \frac{\partial V}{\partial x_d} \right]$, and the Hessian matrix $V_{XX} = \left[ \frac{\partial^2 V}{\partial x_i \partial x_j} \right]_{d \times d}$.

To work directly with \(\beta_t\), we transform the equation \eqref{eqn:BK_process} using  It\^{o}'s lemma. Let \(u_t = \ln(\beta_t)\), so \(\beta_t = e^{u_t}\). Then
\begin{align*}
d\beta_t &= e^{u_t} \, du_t + \frac{1}{2} e^{u_t} \, (du_t)^2 \\
&= \beta_t \left[ \lambda (\mu - \ln(\beta_t)) \, dt + \sigma \sqrt{2\lambda} \, dB_t \right] + \frac{1}{2} \beta_t \cdot 2\lambda \sigma^2 \, dt \\
&= \beta_t \left[ \lambda (\mu - \ln(\beta_t) + \sigma^2) \, dt + \sigma \sqrt{2\lambda} \, dB_t \right].
\end{align*}
Thus
\begin{equation} 
d\beta_t = \beta_t \lambda (\mu - \ln(\beta_t) + \sigma^2) \, dt + \beta_t \sigma \sqrt{2\lambda} \, dB_t. \label{eq:beta}
\end{equation}
with an initial condition $\beta_0 $. Here, $\lambda > 0$ denotes the rate of mean reversion, $\mu$ is the long-term mean, $\sigma$ represents the volatility, and $B_t$ is a standard Brownian motion. Note that $ \ln{\beta_t}$ is a standard Ornstein–Uhlenbeck process with stationary distribution: $\ln{\beta_t}\sim \mathcal{N}(\mu, \sigma^2)$.

With the equation \eqref{eq:beta}, the system of SDE \eqref{eqn:SIHR_model} can be re-expressed as
\begin{equation*}
 dX_t = f(X_t) \, dt + g(X_t) \, dB_t,
\end{equation*}
where $X_t$, $f(\cdot)$, and $g(\cdot)$ are
\begin{equation*}
 X_t
 = \begin{bmatrix} 
   S_t \\ I_t \\ H_t \\ R_t \\  \beta_t \\
   \end{bmatrix}, \quad\:
 f(X_t)
 = \begin{bmatrix}
-\beta_t \frac{S_t I_t}{N} \\
\beta_t \frac{S_t I_t}{N} - \alpha I_t \\
\alpha \gamma I_t - \eta H_t \\
\alpha (1 - \gamma) I_t + \eta H_t \\
\beta_t \lambda (\mu - \ln(\beta_t) + \sigma^2)
   \end{bmatrix}, \quad\:
 g(X_t)
 = \begin{bmatrix}
   0 \\ 0 \\ 0 \\ 0 \\ \beta_t \sigma \sqrt{2\lambda} \\
  \end{bmatrix}.
\end{equation*} 

\begin{Lemma} \label{lemma:continuous_ae}
    The Black–Karasinski process $\{X_t, t \geq 0\}$ defined in \eqref{eqn:BK_process} is continuous on $(0,\infty)$ almost everywhere.
\end{Lemma}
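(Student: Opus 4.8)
The plan is to exploit the fact that $\ln\beta_t$ is a linear Ornstein--Uhlenbeck diffusion, for which an explicit pathwise representation is available, and then to invoke the standard continuity of the It\^o integral. Setting $u_t := \ln\beta_t$, equation \eqref{eqn:BK_process} reads $du_t = \lambda(\mu - u_t)\,dt + \sigma\sqrt{2\lambda}\,dB_t$; multiplying by the integrating factor $e^{\lambda t}$ and integrating yields the pathwise solution
\begin{equation*}
u_t = \mu + (u_0-\mu)e^{-\lambda t} + \sigma\sqrt{2\lambda}\,e^{-\lambda t}\int_0^t e^{\lambda s}\,dB_s .
\end{equation*}
I would stress that this is the genuine sample-path solution, in contrast to the marginal-law identity \eqref{eqn:BK_solution}, which encodes only the one-dimensional distribution of $\beta_t$ at each fixed $t$ (via a single standard normal $\epsilon_t$) and hence carries no information about joint path behaviour.

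First I would observe that the first two summands above are $C^\infty$ functions of $t$ on $[0,\infty)$, hence continuous for every $\omega$. Then I would set $M_t := \int_0^t e^{\lambda s}\,dB_s$: since $\int_0^t e^{2\lambda s}\,ds < \infty$ for every finite $t$, the integrand lies in the class for which the It\^o integral is defined, and $M_t$ is a square-integrable martingale which, by the standard construction of the stochastic integral (see \cite{Oksendal2003}), admits a modification with almost surely continuous sample paths; I would fix such a modification. Consequently $t\mapsto e^{-\lambda t}M_t$, being the product of a continuous deterministic function and an a.s.-continuous process, is a.s. continuous, so $t \mapsto u_t$ is a.s. continuous on $[0,\infty)$. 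Composing with the continuous map $x\mapsto e^{x}$ then shows that $\beta_t = e^{u_t}$ is a.s. continuous (and strictly positive) on $[0,\infty)$, in particular on $(0,\infty)$; since the compartments $S_t, I_t, H_t, R_t$ are Riemann integrals of continuous integrands once $\beta_t$ is continuous, they are $C^1$, and the full state vector $X_t$ therefore has a.s. continuous paths.

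The main obstacle is conceptual rather than computational: one must be careful to read \emph{continuous almost everywhere} as \emph{$\mathbb{P}$-almost surely continuous in $t$} (not as continuity for Lebesgue-a.e.\ $t$), and one must resist basing the argument on \eqref{eqn:BK_solution}, which is valid only marginally in $t$. The single substantive input is the classical theorem that the It\^o integral of an adapted, locally square-integrable integrand possesses a continuous modification \cite{Oksendal2003}; the rest is elementary. As an alternative that avoids the explicit solution, I would note that the drift and diffusion coefficients in \eqref{eq:beta} are locally Lipschitz on $(0,\infty)$, so standard SDE theory produces a strong solution with a.s. continuous paths up to its explosion time, and Theorem~\ref{Theorem:existence_and_uniqueness} shows that time is $+\infty$; either route delivers the claim.
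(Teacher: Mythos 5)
Your proof is correct, but it is genuinely more than what the paper does: the paper's entire proof of this lemma is a one-line citation to {\O}ksendal \cite{Oksendal2003}, whereas you construct the path explicitly. Your route — integrating factor for the OU equation $du_t=\lambda(\mu-u_t)\,dt+\sigma\sqrt{2\lambda}\,dB_t$, continuity of (a modification of) the It\^o integral $\int_0^t e^{\lambda s}\,dB_s$ of a deterministic locally square-integrable integrand, and composition with $x\mapsto e^x$ — is the standard argument the citation is gesturing at, so in spirit the approaches coincide, but your write-up is self-contained and additionally makes two points the paper leaves implicit: that ``continuous almost everywhere'' should be read as $\mathbb{P}$-a.s.\ continuity of sample paths, and that the identity \eqref{eqn:BK_solution} is only a marginal-law statement and cannot by itself yield path continuity. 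Both clarifications are valuable, since the paper's later Lipschitz argument quotes this lemma precisely to get a.s.\ boundedness of $\beta_t$ on compact time intervals, which requires pathwise continuity, not continuity of the marginal law.

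One caution about your closing alternative: deducing continuity from ``locally Lipschitz coefficients plus Theorem~\ref{Theorem:existence_and_uniqueness} to rule out explosion'' risks circularity in the paper's logical structure, because Theorem~\ref{Theorem:existence_and_uniqueness} is proved via Lemma~\ref{lem:local Lipschitz}, whose proof in turn invokes the present lemma to bound $\beta_t$ on $[0,T]$. The circularity is avoidable — the $\beta$-equation \eqref{eq:beta} is autonomous and your explicit solution already shows it never explodes — but as phrased the alternative should not lean on the theorem. Your primary argument does not have this issue and stands on its own.
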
 

\begin{proof}
For a rigorous proof of the continuity of the Black–Karasinski process, refer to \cite{Oksendal2003}.
\end{proof}

We will show that the stochastic SIHR model satisfies the Lipschitz condition. The proof involves proving that both the drift term $f(X)$ and the diffusion term $g(X)$ are Lipschitz continuous.

\begin{Lemma}\label{lem:local Lipschitz}
The system in \eqref{eqn:SIHR_model} is locally Lipschitz on $\mathbb{R}_+^4 \times [a,\infty)$ with $a>0$  almost everywhere. 
\end{Lemma}

\begin{proof}
   Let \( X = [S, I, H, R, \beta]^{\top} \) and \( Y = [S', I', H', R', \beta']^{\top} \).  As \( g(X_t) \) is linear in \( \beta_t \), we have
\begin{equation*}
    \| g(X) - g(Y) \| = |\beta_t \sigma \sqrt{2\lambda} - \beta'_t \sigma \sqrt{2\lambda}| = \sigma \sqrt{2\lambda} |\beta - \beta'| \leq \sigma \sqrt{2\lambda} \| X - Y \|,
\end{equation*}
so \( g \) is globally Lipschitz.

For \( f(X_t) \), consider a compact set \( K \subset \mathbb{R}_+^4 \times [a, \infty) \), where \( |S|, |I|, |H|, |R|, \beta \leq M \) and \( \beta \geq a > 0 \). Denote \(f_1, f_2, \cdots, f_5\) as the five rows of $f(X)$.  Let $X, Y\in K$.

Then, for the $S$ compartment, we have
\begin{equation*}
|f_1(X) - f_1(Y)| = \left| -\beta \frac{S I}{N} + \beta' \frac{S' I'}{N} \right| \leq \frac{M^2}{N} (|\beta - \beta'| + |S - S'| + |I - I'|). 
\end{equation*}
For the $I$ compartment, we have
\begin{equation*}
|f_2(X) - f_2(Y)| \leq \frac{M^2}{N} (|\beta - \beta'| + |S - S'| + |I - I'|) + \alpha |I - I'|.
\end{equation*}
For the $H$ compartment, we have
\begin{equation*}
|f_3(X) - f_3(Y)| = |\alpha \gamma (I - I') - \eta (H - H')| \leq \alpha \gamma |I - I'| + \eta |H - H'|.
\end{equation*}
For the $R$ compartment, we have
\begin{equation*}
|f_4(X) - f_4(Y)| = |\alpha (1 - \gamma) (I - I') + \eta (H - H')| \leq \alpha (1 - \gamma) |I - I'| + \eta |H - H'|.
\end{equation*}
For $\beta$, define \( h(\beta) = \beta \lambda (\mu - \ln(\beta) + \sigma^2) \). As $h(\beta)$ is continuous a.e., we have \(  h'(\beta)= \mu +  \sigma^2 -1 - \ln(\beta)  \) is bounded a.e., say by \( K_h \), so 
\begin{equation*}
|f_5(X) - f_5(Y)| \leq \lambda K_h |\beta - \beta'|.
\end{equation*}

With the bounds from each component and the \( L^1 \)-norm, we have
\begin{equation*}
\| f(X) - f(Y) \|_1 \leq C ( |S - S'| + |I - I'| + |H - H'| + |R - R'| + |\beta - \beta'| ) = C \| X - Y \|_1,
\end{equation*}
where \( C \) depends on \( M, N, \alpha, \gamma, \eta, \lambda, K_h \). Thus, \( f \) is locally Lipschitz a.e., and the system satisfies the local Lipschitz condition on \( \mathbb{R}_+^4 \times [a, \infty) \) a.e..

\end{proof}

\begin{Theorem}\label{Theorem:existence_and_uniqueness}
For any initial value $[S(0), I(0), H(0), R(0), \beta_0]^{\T} \in \mathbb{R}_+^4 \times [a,\infty)$ with $a>0$, there exists a unique solution $X_t = [S_t, I_t, H_t, R_t, \beta_t]^{\T}$ of the model \eqref{eqn:SIHR_model} on $t \geq 0$, and the solution remains in $\mathbb{R}_+^4 \times [a,\infty)$ almost surely.
\end{Theorem}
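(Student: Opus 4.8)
The plan is the classical two-step programme for SDE systems: first extract a unique \emph{local} solution from the local Lipschitz property already in hand, then upgrade it to a \emph{global} one by ruling out finite-time explosion with a Lyapunov estimate based on the generalized It\^o formula. By Lemma~\ref{lem:local Lipschitz}, the drift $f$ and diffusion $g$ in $dX_t=f(X_t)\,dt+g(X_t)\,dB_t$ are locally Lipschitz on $\mathbb{R}_+^4\times(0,\infty)$, so the standard existence--uniqueness theory (see \cite{Oksendal2003}) yields, for every admissible initial value, a unique maximal solution on a random interval $[0,\tau_e)$ whose endpoint $\tau_e$ is the explosion time; everything then reduces to proving $\tau_e=\infty$ almost surely. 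Two structural facts do most of the work. First, adding the first four equations of \eqref{eqn:SIHR_model} gives $d(S_t+I_t+H_t+R_t)=0$, so $S_t+I_t+H_t+R_t\equiv N$ on $[0,\tau_e)$ and each compartment stays in $[0,N]$ (no explosion in those coordinates; in particular $I_t\le N$, used below). Second, $\ln\beta_t$ solves the \emph{linear} OU equation \eqref{eqn:BK_process}, and its explicit solution \eqref{eqn:BK_solution}, together with Lemma~\ref{lemma:continuous_ae}, shows $\ln\beta_t$ is an a.s.\ finite, continuous Gaussian process on every bounded interval, so $\beta_t$ neither explodes nor reaches $0$ in finite time. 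What is left is to keep the trajectory off the lower faces $\{S=0\}\cup\{I=0\}\cup\{H=0\}\cup\{R=0\}$.

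For $k$ large, with $X_0$ in $D_k:=\{x:\ S,I,H,R>1/k,\ 1/k<\beta<k\}$, set $\tau_k:=\inf\{t<\tau_e:X_t\notin D_k\}$ and $\tau_\infty:=\lim_k\tau_k\le\tau_e$; it suffices to show $\mathbb{P}(\tau_\infty\le T)=0$ for every $T>0$. I would use the $C^2$ Lyapunov function
\[
V(X)=\beta+(\ln\beta-\mu)^2-\ln\tfrac{S}{N}-\ln\tfrac{I}{N}-\ln\tfrac{H}{N}-\ln\tfrac{R}{N},
\]
which is nonnegative on $\mathbb{R}_+^4\times(0,\infty)$ and satisfies $\rho_k:=\inf_{\partial D_k}V\to\infty$ as $k\to\infty$. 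Applying the generalized It\^o formula, discarding the manifestly nonpositive contributions ($-\beta S/N$ from the $I$-equation, $-\alpha\gamma I/H$ from the $H$-equation, $-(\alpha(1-\gamma)I+\eta H)/R$ from the $R$-equation) and using $I\le N$, one is left with
\[
\mathcal{L}V\ \le\ \beta\big(\lambda(\mu+\sigma^2)+1-\lambda\ln\beta\big)-2\lambda(\ln\beta-\mu)^2+2\lambda\sigma^2+\alpha+\eta .
\]
Since $\beta\mapsto\beta\big(\lambda(\mu+\sigma^2)+1-\lambda\ln\beta\big)$ is bounded above on $(0,\infty)$, this gives $\mathcal{L}V\le C$ for a constant $C=C(\lambda,\mu,\sigma,\alpha,\eta)$. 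Dynkin's formula then yields $\mathbb{E}\,V(X_{\tau_k\wedge T})\le V(X_0)+CT$, hence $\mathbb{P}(\tau_k\le T)\le(V(X_0)+CT)/\rho_k\to0$, so $\tau_\infty=\infty$ a.s., $\tau_e=\infty$, and the unique global solution stays in $\mathbb{R}_+^4\times(0,\infty)$. (The lower bound $\beta_t\ge a$ stated in the theorem should be read as a condition on admissible \emph{initial} data, and as the region on which Lemma~\ref{lem:local Lipschitz} is invoked, since a log-OU coordinate cannot in general be confined below.)

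The delicate point---which I expect to be the main obstacle---is the bilinear transmission term $\beta_tS_tI_t/N$: it feeds the exponential-scale variable $\beta=e^{\ln\beta}$ into the compartment equations, so a Lyapunov function built only from $(\ln\beta-\mu)^2$ and $-\ln(\text{compartment})$ has $\mathcal{L}V$ growing like $+\beta$, which the quadratic-in-$\ln\beta$ terms cannot absorb. Including the linear term $\beta$ in $V$ repairs this, because its BK drift (from \eqref{eq:beta}) contributes $-\lambda\beta\ln\beta$ for large $\beta$, dominating the offending $+\beta$ and in fact making $\mathcal{L}V$ bounded above rather than merely $\le C(1+V)$; establishing this domination cleanly and checking that the discarded terms are genuinely harmless are the steps that need care. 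A shorter, purely pathwise alternative is also available---from \eqref{eqn:BK_solution} treat $\beta_t$ as a given a.s.-continuous positive coefficient and solve the resulting random ODE for $(S,I,H,R)$ by variation of constants, so that $S_t=S_0\exp(-\tfrac1N\int_0^t\beta_uI_u\,du)>0$, $I_t\ge I_0e^{-\alpha t}$, $H_t\ge H_0e^{-\eta t}$ and $R_t\ge R_0$, all bounded above by $N$---but the Lyapunov route is the one that meshes with the lemmas and the generalized It\^o formula the paper has already set up.
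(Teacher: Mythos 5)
Your proof is correct and follows essentially the same route as the paper: local Lipschitz continuity (Lemma~\ref{lem:local Lipschitz}) gives a unique maximal solution, and a Khasminskii-type Lyapunov/stopping-time argument rules out explosion using exactly the same mechanism — the $-\lambda\beta\ln\beta$ term in the BK drift dominating the $+\beta$ contribution from the transmission term — so your choice $V=\beta+(\ln\beta-\mu)^2-\sum\ln(\cdot/N)$ and the Markov-inequality finish are only cosmetic variants of the paper's $\sum(x-1-\ln x)$ function and contradiction argument. Your parenthetical observation is also apt: the paper's own stopping sets $F_l=(-l,l)^4\times[\tfrac{1}{l},l)$ only establish that $\beta_t$ remains in $(0,\infty)$, so the stated invariance of $[a,\infty)$ should indeed be read as a condition on the initial data rather than a property preserved for all time.
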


\begin{proof}
Since $S_t + I_t + H_t + R_t = N$ for any $t \geq 0$, we can reduce the system by substituting $R$ with $R = N - S - I - H$. By Lemma~\ref{lem:local Lipschitz} in Appendix, the system is locally Lipschitz on $\mathbb{R}_+^4 \times [a,\infty)$ a.s., so the system has a unique local solution $X_t$ on $t \in (0, \tau_e]$ a.e., where $\tau_e$ is an explosion time. To establish global existence, we define a sequence of stopping times
\begin{equation*}
 \tau_l = \inf \{ 0\leq t \leq \tau_e : X_t \notin F_l \}, \qquad l \geq l_0,
\end{equation*}
where $F_l = (-l, l)^4\times [\frac{1}{l},l)$, and for any given $X(0) \in \mathbb{R}_+^4 \times [a,\infty)$, there exists a sufficiently large $l_0$ such that $X(0) \in F_{l_0}$. Note that $\tau_l$ is an increasing sequence of $l$. Denote $\tau_\infty = \lim_{l \to \infty} \tau_l$. Without loss of generality, we assume that $\inf\{\emptyset\} = +\infty$. Then, $\tau_\infty \leq \tau_e$ a.s. We only need to prove $\tau_\infty = +\infty$ a.s., which will imply $\tau_e = +\infty$ and $X_t \in \mathbb{R}_+^4 \times [a,\infty)$ a.s. for $\forall t > 0$.

We prove this by contradiction. If we assume $\tau_\infty < +\infty$ a.s., then there exists $\varepsilon_0 \in (0, 1)$, $T_0 > 0$, and $N_0 > 0$ such that
\begin{equation}\label{eqn:contra_epsilon0}
 \mathbb{P}(\tau_l \leq T_0) \geq \varepsilon_0, \qquad \text{for } \forall \: l \geq N_0.
\end{equation}
Consider the Lyapunov function $V(S, I, H, \beta)$ defined as:
\begin{equation*}
 V( S, I, H, \beta ) = (S-1-\ln S)+(I-1-\ln I)+(H-1-\ln H)+(\beta-1-\ln \beta),
\end{equation*}
which is non-negative and $V(S, I, H, \beta) \in C^2(\mathbb{R}_{+}^4 \times[a, \infty), \mathbb{R})$. As the drift term of $\ln \beta$ is $-\lambda(\ln \beta-\mu)$, and the diffusion term is $\sigma \sqrt{2 \lambda}$, by It\^{o} calculus, the dynamics of $V(S, I, H, \beta)$ satisfy:
\begin{equation*}
 d V(S, I, H, \beta)
 = \mathcal{L} V(S, I, H,  \beta) d t + (\beta-1) \sigma \sqrt{2 \lambda} \, d B_t,
\end{equation*}
where the stochastic process's generator $\mathcal{L}$ is given by:
\begin{align*}
 \mathcal{L} V(S, I, H, \beta)
 &= \left( 1 - \frac{1}{S} \right) \left( -\beta \frac{SI}{N} \right)
  + \left( 1 - \frac{1}{I} \right) \left( \beta \frac{SI}{N} - \alpha I \right)  \\
 &\quad + \left( 1 - \frac{1}{H} \right) (\alpha \gamma I - \eta H) 
  + \left(1 - \frac{1}{\beta}\right) \beta \lambda (\mu - \ln \beta + \sigma^2) \\
&\quad + \frac{1}{2} \cdot \frac{1}{\beta^2} \cdot (\beta \sigma \sqrt{2\lambda})^2\\
 &= -\beta \frac{S I}{N} + \beta \frac{I}{N} + \beta \frac{S I}{N} - \alpha I - \beta \frac{S}{N} + \alpha + \alpha \gamma I - \eta H - \frac{\alpha \gamma I}{H} + \eta \\
&\quad + \lambda \beta (\mu - \ln \beta + \sigma^2) - \lambda (\mu - \ln \beta + \sigma^2) + \lambda \sigma^2\\
& \le \beta \left( \frac{I - S}{N} \right) - \alpha (1 - \gamma) I - \eta H - \frac{\alpha \gamma I}{H} + \alpha + \eta \\
&\quad + \lambda \beta (\mu - \ln \beta + \sigma^2) - \lambda (\mu - \ln \beta + \sigma^2) + \lambda \sigma^2\\
& \le \beta + \lambda \beta (\mu - \ln \beta + \sigma^2) + \alpha + \eta - \lambda (\mu - \ln \beta + \sigma^2) + \lambda \sigma^2.
\end{align*}
As \(\beta \to \infty\), \(-\lambda \beta \ln \beta\) dominates, and as \(\beta \to 0^+\), \(- \lambda (\mu - \ln \beta + \sigma^2)\) ensures \(\mathcal{L}V\) remains finite. Thus, \(\mathcal{L}V \leq M\) for some constant \(M\).
\begin{equation*}
 d V(S, I, H, \beta) \leq M dt + (\beta - 1)\sigma\sqrt{2\lambda} \, d B_t
\end{equation*}
for some $M\in \mathbb{R}$. Integrate this inequality from $0$ to $\tau_l \wedge T_0$, use the boundedness of $\mathcal{L}V$, and take the expectation:
\begin{equation} \label{eqn:E_V}
 \mathbb{E}[
  V( S(\tau_l \wedge T_0), I(\tau_l \wedge T_0), H(\tau_l \wedge T_0), \ln(\beta) (\tau_l \wedge T_0) ) ] 
 \leq MT_0 + V(S(0), I(0), H(0), \ln(\beta_0)).
\end{equation}
On the other hand, $\forall$ path $\omega \in \{\tau_l \leq T_0\}$, at least one of $S(\tau_l, \omega)$, $I(\tau_l, \omega)$, $H(\tau_l, \omega)$, $\ln{\beta}(\tau_l, \omega)$ is no less than $l$. By combining \eqref{eqn:contra_epsilon0} and \eqref{eqn:E_V}, we have
\begin{align*}
  M T_0 + V(S(0), I(0), H(0), \beta_0)
  &\geq \mathbb{E}\left[ V( S(\tau_l \wedge T_0), I(\tau_l \wedge T_0), H(\tau_l \wedge T_0), \ln{\beta}(\tau_l \wedge T_0) ) \right] \\
  &\geq \mathbb{E}\left[ 1_{\{\tau_l \leq T_0\}} V( S(\tau_l, \omega), I(\tau_l, \omega), H(\tau_l, \omega), \ln{\beta}(\tau_l, \omega) ) \right] \\
  &\geq \mathbb{P}(\tau_l \leq T_0) \cdot V( S(\tau_l, \omega), I(\tau_l, \omega), H(\tau_l, \omega), \ln{\beta}(\tau_l, \omega) ) \\
  &\geq \varepsilon_0 \left[(l - 1 - \ln l) \land \left(\frac{1}{l} - 1 + \ln l \right) \right].
\end{align*}
Taking $l \to \infty$, this leads to the contradiction
\begin{equation*}
 \infty = M T_0 + V( S(0), I(0), H(0), \ln{\beta_0} ) < \infty.
\end{equation*}
Therefore, we must have $\tau_l = \infty$ almost surely. This completes the proof of Theorem \ref{Theorem:existence_and_uniqueness}.
\end{proof}

\subsection{Algorithms for resampling in the log domain}

Resampling in particle filtering transforms a set of weighted samples from the posterior distribution $\pi(X_{0:t} \,|\, y_{0:t},\theta)$ to an unweighted set by duplicating high-weight samples and discarding low-weight samples. This process improves sampling efficiency and enhances the estimation of high-dimensional integrals in Bayesian filtering.

To improve numerical stability when dealing with very small weights, which is a common situation due to observation densities with highly concentrated probability mass, we perform resampling in the log domain. Performing computation  in the log domain is a standard numerical technique that prevents numerical underflow and overflow issues by working with logarithms of density functions instead of the density functions themselves. We employ a variant of the systematic resampling algorithm that was proposed by Gentner et al.~\cite{Gentner:2018} to allow all resampling computations to be performed strictly in the log domain. We selected systematic resampling due to its $\mathcal{O}(n)$ runtime and the low variance of the resulting samples.

A central challenge in log-domain resampling is computing the normalization constant for the distribution of the log weights $\hat{w}_t^i$. This is efficiently addressed using the Iterative Jacobian Logarithm algorithm (Algorithm~\ref{alg:jacob}), which computes the logarithm sums of the form $\log( \sum_{i=1}^{N} \exp( \hat{w}_t^i ))$ without directly exponentiating the log weights. This function is also known as the LogSumExp function and well studied in the machine learning community for use in neural network architectures.

\begin{algorithm}[!ht]
\caption{Iterative Jacobian Logarithm}\label{alg:jacob}
\begin{algorithmic}[1]
   \State \textbf{Input:} $\{ \hat{w}_t^i\}_{i=1}^{N_p}$

   \State \textbf{Initialize:} $C_t^1 = \hat{w}_t^1$

   \For{$i = 2:N_p$}
     \State $C_t^i = \max(\hat{w}_t^i,C_t^{i-1}) + \log(1+\exp(-|\hat{w}_t^i - C_t^{i-1}|)) $\Comment{$C_t^i = LogSumExp(\{\hat{w}_t^k\}_{k=1}^i)$} 
   \EndFor

   \State \textbf{Output:} $\{C_t^i\}_{i=1}^{N_p}$
\end{algorithmic}
\end{algorithm}

The iterative nature of Algorithm~\ref{alg:jacob} facilitates the computation of both the log normalization constant and the log cumulative distribution function (CDF) of the particle weights. The log CDF is essential for the systematic resampling algorithm in the log domain, as presented in Algorithm~\ref{alg:systematic_log}. By performing all calculations in the log domain, we effectively sidestep numerical instability issues associated with underflow and overflow.

\begin{algorithm}[!ht]
\caption{Systematic Resampling in Log Domain}\label{alg:systematic_log}
\begin{algorithmic}[1]
    \State \textbf{Input:} Normalized weights $\{\hat{w}_{t}^{i*}\}_{i=1}^{N_p}$ 
    \State \textbf{Initialize:} $k = 1$,$\{\ell_i = i\}_{i=1}^{N_p}$  \Comment{$\{\ell_i\}_{i=1}^{N_p}$ is initialized as the index set $\ell_1 = 1, \ell_2 = 2, \dots$} 
    \State Compute $\{C_t^i\}_{i=1}^{N_p} = LogSumExp(\{\hat{w}_{t}^{i*}\}_{i=1}^{N_p})$ \Comment{ Algorithm \ref{alg:jacob}}. 
    
    \State Draw $s \sim \mathcal{U}(0,\frac{1}{N_p})$
      \For{$i = 1:N_{p}$}
        \vspace{1mm}
        \State $U^{i} = \log \big( s + \frac{i}{N_p} \big)$
        \vspace{1mm}
        \While{$U^{i} > C_{t}^{k}$}
            \State $k = k + 1$
      \EndWhile\label{sysendwhile}
      \State $\ell_{i} = k$
      \EndFor
       \State \textbf{Output:} $\{\ell_i\}_{i=1}^{N_p}$

  \end{algorithmic}
\end{algorithm}

By utilizing these algorithms, we perform resampling entirely in the log domain, enhancing numerical stability and efficiency in the PF. This approach enables efficient sampling of the state space and avoids the particle degeneracy issues associated with very small weights.

\begin{figure}[!ht]
\centering
\includegraphics[width=1\textwidth]{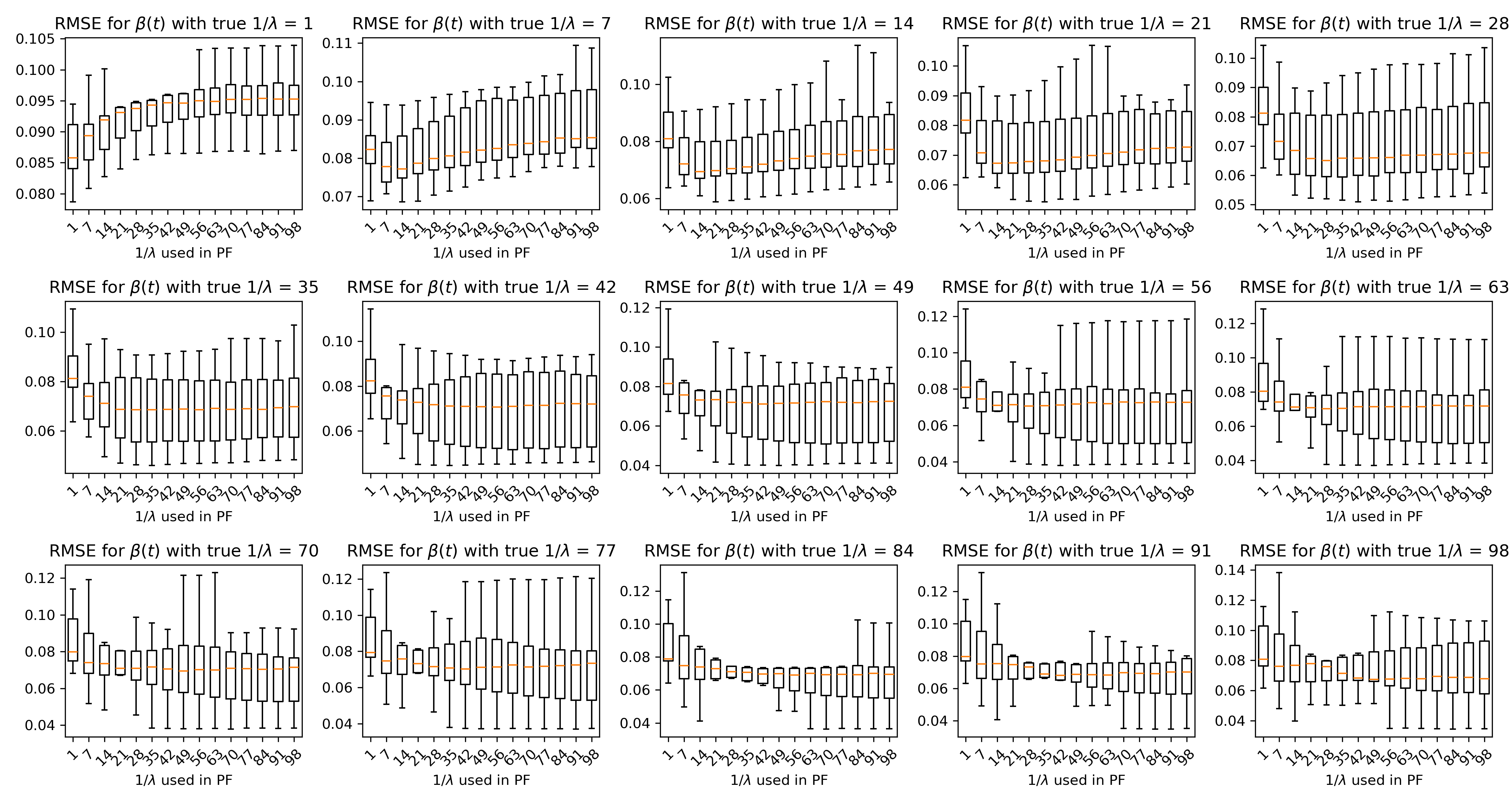}
\caption{RMSE accuracy scores with true and misconfigured $\lambda$ values. For each true $\lambda$, ten different $\beta_t$ path and data set were simulated as described in Experiment~\ref{ex:pMCMC_full_model}. Then, each $1/\lambda \in \{ 1, 7, \ldots, 98 \}$ was used in PF to infer the path of $\beta_t$ with the simulated data. The RMSE accuracy scores were calculated for the inferred $\beta_t$. The results demonstrate that when the decorrelation time $1/\lambda$ is 28 days or longer, the impact of misconfiguring $1/\lambda$ to values exceeding 28 days is minimal. This suggests that setting $1/\lambda \geq 28$ days serves as a robust default, maintaining high estimation accuracy of $\beta_t$ even when $\lambda$ is not precisely inferred.}
\label{fig:wrong_lambda_RMSE_box_all}
\end{figure}

\begin{figure}[!ht]
\centering
\includegraphics[width=1\textwidth]{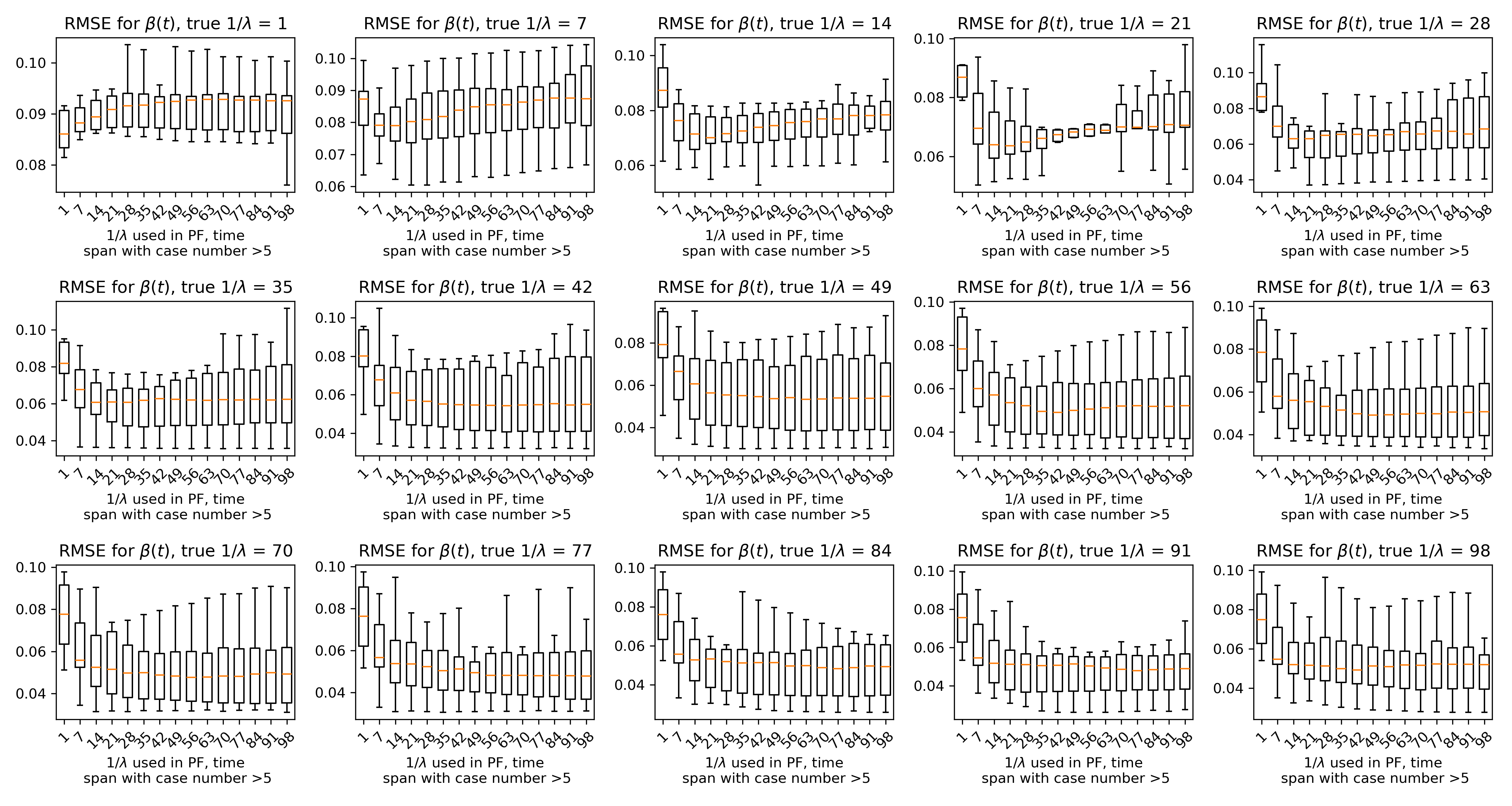}
\caption{RMSE accuracy scores with true and misconfigured $\lambda$ values for case number $\geq$ 5. For each true $\lambda$, ten different $\beta_t$ path and data set were simulated as described in Experiment~\ref{ex:pMCMC_full_model}. Then, each $1/\lambda \in \{ 1, 7, \ldots, 98 \}$ was used in PF to infer the path of $\beta_t$ with the given simulated data. The RMSE accuracy scores were calculated for the inferred $\beta_t$ over the time interval when the case number is greater than or equal to 5. The results demonstrate that when the decorrelation time $1/\lambda$ is 35 days or longer, the impact of misconfiguring $1/\lambda $ to values exceeding 35 days is minimal. This suggests that setting $1/\lambda \geq 35$ days serves as a robust default, maintaining high estimation accuracy of $\beta_t$ even when $\lambda$ is not precisely inferred.}
\label{fig:wrong_lambda_RMSE_box_k}
\end{figure}

\clearpage

\newpage

\bibliography{NIH}{}
\bibliographystyle{nihunsrt} 

\end{document}